\newtheorem{theorem}{Theorem}
\newtheorem{observation}[theorem]{Observation}
\newtheorem{conjecture}[theorem]{Conjecture}
\newcommand{\ketbra}[2]{\left| #1 \right \rangle \left \langle #2 \right |  }
\newcommand{\tr}{{\mathrm{tr}}}
\newcommand{\ex}[1]{\ensuremath{\langle{#1}\rangle}}
\newcommand{\mean}[1]{\ensuremath{\left\langle{#1}\right\rangle}}
\newcommand{\eins}{\mathbbm{1}}
\renewcommand{\vr}{\ensuremath{\varrho}}
\renewcommand{\vec}[1]{\ensuremath{\boldsymbol{#1}}}
\newcommand{\forget}[1]{}
\newcolumntype{P}[1]{>{\centering\arraybackslash}p{#1}}
\begin{document}
\title{Geometry of two-body correlations in three-qubit states}

\author{Shravan Shravan}
\affiliation{Center for Quantum Information and Control, Department of Physics and Astronomy, University of New Mexico, Albuquerque, NM, USA}

\author{Simon Morelli}
\affiliation{BCAM - Basque Center for Applied Mathematics,
Mazarredo 14, E48009 Bilbao, Basque Country - Spain}

\author{Otfried G\"uhne}
\affiliation{Naturwissenschaftlich-Technische Fakult\"at, Universit\"at Siegen, Walter-Flex-Stra\ss e 3, 57068 Siegen, Germany}

\author{Satoya Imai}
\affiliation{Naturwissenschaftlich-Technische Fakult\"at, Universit\"at Siegen, Walter-Flex-Stra\ss e 3, 57068 Siegen, Germany}

\date{\today}

\begin{abstract}
We study restrictions of two-body correlations in three-qubit states, using three local-unitarily invariant coordinates based on the Bloch vector lengths of the marginal states. First, we find tight nonlinear bounds satisfied by all pure states 
and extend this result by including the three-body correlations. Second, we consider mixed states and conjecture a tight non-linear bound for all three-qubit states. 
Finally, within the created framework we give criteria to detect different types 
of multipartite entanglement as well as characterize the rank of the quantum 
state.
\end{abstract}
\maketitle

\section{Introduction}
Processing quantum information relies on a thorough control of quantum systems of limited size. The states attainable by quantum systems form a set, called the quantum state space. A firm understanding of the underlying geometry of the state space hence becomes fundamental for developing efficient tools in quantum information processing.
There are remarkable differences between classical and quantum state spaces. The latter have a richer and more convoluted structure than their classical counterparts, which makes them difficult to characterize~\cite{bengtsson2017geometry}.
A famous example is the state space of a qubit, the so-called Bloch sphere, where all states are represented by a three-dimensional unit ball, each quantum 
state corresponding uniquely to a point of the ball \cite{bloch1946nuclear,Fano1954,Fano1957}.
A surface (or interior) point of this sphere corresponds to a pure (or mixed) state, and the length of its radius describes the state's mixedness, 
a quantity that remains invariant under unitary rotations.
However, the state spaces of higher-dimensional and multipartite quantum systems are complicated high-dimensional objects. 
It is hence fruitful to characterize them using only a restricted and accessible set of parameters for several reasons.
First, this compact description allows to see regularities and structures that would otherwise be obfuscated by the exponentially growing number of parameters.
Second, such {a reduced set of parameters might be easier} to {access experimentally and} keep track of, when a full description of a high-dimensional quantum state becomes experimentally unviable or computationally intractable.
Third, a reduced set of parameters allows to visualize relevant properties of the quantum state space in lower dimensions.
Many efforts have been devoted to the geometrical characterization of {the quantum state space} of
single qutrits \cite{kurzynski2016three, eltschka2021shape},
two qubits \cite{horodecki1996information, gamel2016entangled,morelli2023correlation},
three qubits \cite{wyderka2020characterizing,imai2021bound},
many qubits \cite{hiesmayr2008simplex,toth2007optimal,toth2012multipartite}
or other properties like
separable balls \cite{zyczkowski1998volume},
steering ellipsoids \cite{jevtic2014quantum},
and the
Majorana representation \cite{martin2010multiqubit}.

In this paper, we study the geometry of the admissible two-body correlations of a three-qubit system and derive 
important properties of the global state from its two-body correlations.
This is connected to the quantum marginal problem, where properties of a global multipartite state can be inferred 
from local properties of the parties and correlations between a reduced number of parties
\cite{linden2002almost,
diosi2004three,
jones2005parts,
wurflinger2012nonlocal, 
sawicki2013pure,
walter2013entanglement,
miklin2016multiparticle,
wyderka2017almost,
yu2021complete,
navascues2021entanglement,
huber2022refuting}.
Let $\vr_{ABC}$ be a three-qubit state and let $\vr_{AB} = \tr_C(\vr_{ABC})$ be its two-qubit reduced state shared between the systems $A$ and $B$. Similarly, consider the two-qubit reduced states shared between the systems $B-C$, $\vr_{BC}$, and between the systems $C-A$, $\vr_{CA}$.
Then, the total two-body correlations of the reduced states based on the Pauli matrices $\sigma_{i}$ for $i=1,2,3$
are given by
\begin{subequations}
\begin{align}
    S^{AB}_2
    &= \sum_{i,j = 1}^3
    \braket{\sigma_i \otimes \sigma_j}_{\vr_{AB}}^2,
    \\
    S^{BC}_2
    &= \sum_{i,j = 1}^3
    \braket{\sigma_i \otimes \sigma_j}_{\vr_{BC}}^2,
    \\
    S^{CA}_2
    &= \sum_{i,j = 1}^3
    \braket{\sigma_i \otimes \sigma_j}_{\vr_{CA}}^2.
\end{align}
\end{subequations}
The quantities $S^{XY}_2$ for $X,Y = A, B, C$ are known to be invariant under local unitaries and can be understood as {part of the total Bloch vector length, encoding the correlations between the corresponding subsystems}.
{We aim to characterize the state space of a three-qubit system with the three coordinates $(x, y, z) = (S^{AB}_2, S^{BC}_2, S^{CA}_2)$, thus obtaining a three-dimensional representation of the in principle $63$-dimensional state space.}

Several works have analyzed quantum state spaces in similar directions.
One research line is to take an average over all subsystems and consider the so-called two-body sector length 
$S_2 \equiv S^{AB}_2 + S^{BC}_2 + S^{CA}_2$. Similarly, one can define general $k$-body sector lengths for general
multipartite systems \cite{aschauer2003local, wyderka2020characterizing, eltschka2020maximum, imai2021bound, 
cieslinski2023analysing}. For two qubits there exists a full description of the restriction of the one- and 
two-body sectors~\cite{Morelli2020dimensionally, wyderka2020characterizing, morelli2023correlation}.
Moreover, for three qubits the admissible sector lengths $S_1, S_2, S_3$ form a three-dimensional polytope whose boundaries
are explicitly known \cite{wyderka2020characterizing}.
Another research line is to develop practical methods for accessing $S^{XY}_2$ or more general sector lengths without state tomography.
Recently, randomized measurement schemes with Haar random local unitaries have been used to obtain the sector length directly \cite{
brydges2019probing,
elben2023randomized,
cieslinski2023analysing}.
This method does not require spatially-separated parties to align their measurement directions, thus enabling reference-frame-independent 
quantum information processing.

{The paper is structured as follows.}
In Sec.~\ref{sec:pure}, we consider three-qubit pure states and find tight non-linear bounds on the two-body correlations defined above, 
improving the previously existing condition $S_2 = 3$.
In Sec.~\ref{sec:mixed}, we discuss the analysis of mixed states in the same coordinates.
First, we extend our previous solution to mixed states and conjecture a tight non-linear bound for all three-qubit states.
Next, we present systematic methods to obtain linear polytopes for mixed separable states.
These results allow us to detect multipartite entanglement from marginal correlations that 
are invariant under local unitaries. In Sec.~\ref{sec:rank}, we discuss conditions imposed 
on the two-body correlations based on the rank of the quantum states. Finally, we conclude and mention 
further problems worth to be studied.

\section{Pure states}\label{sec:pure}
In this section, we consider pure three-qubit states.
We begin by noting that the condition $S_2 = 3$ holds for any pure three-qubit 
state $\vr_{ABC}$~\cite{wyderka2020characterizing}. Indeed, using 
the  Schmidt decomposition, it is easy to see that for any bipartition $X,\Bar{X}$ 
of the system, 
$\operatorname{tr(\vr(\vr_X \otimes \mathds{1}_{\Bar{X}}))} 
= \operatorname{tr(\vr(\mathds{1}_X \otimes \vr_{\Bar{X}}))}$; 
meaning that the purities $\operatorname{tr}(\vr_X^2)$ and 
$\operatorname{tr}(\vr^2_{\Bar{X}})$ are the same.
Summing over the above expression for all the three bipartitions 
gives $S_2 = 3$.
At the same time, this is not the strongest restriction on the set of points $(S^{AB}_2, S^{BC}_2, S^{CA}_2)$ compatible with a pure three-qubit state.
We now present the first result of this paper: 
\begin{observation}\label{ob:proof_pure}
For any three-qubit pure state the linear condition $S^{AB}_2 + S^{BC}_2 + S^{CA}_2 = 3$ holds. It further holds that 
\begin{subequations}
       \begin{align}
        \begin{split}
            \label{eq:pure_nonlin1}
            \sqrt{S_2^{AB}} + \sqrt{S_2^{BC}} - \sqrt{S_2^{CA}}
            &\leq \sqrt{3}, 
        \end{split} \\
        \begin{split}
            \label{eq:pure_nonlin2}
            \sqrt{S_2^{AB}} - \sqrt{S_2^{BC}} + \sqrt{S_2^{CA}}
            &\leq \sqrt{3}, 
        \end{split} \\
        \begin{split}
            \label{eq:pure_nonlin3}
            -\sqrt{S_2^{AB}} + \sqrt{S_2^{BC}} + \sqrt{S_2^{CA}}
            &\leq \sqrt{3}, 
        \end{split}
       \end{align}
    \end{subequations}
\end{observation}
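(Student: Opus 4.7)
The plan is to reduce the three inequalities to a single algebraic statement controlled by the one-qubit marginal Bloch vector lengths $s_X := |\vec{r}_X|$.

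First I would use the fact that for a pure state $\vr_{ABC}$ the reduced states $\vr_X$ and $\vr_{YZ}$ have equal purity. Combining $\tr(\vr_X^2) = (1+s_X^2)/2$ with $\tr(\vr_{YZ}^2) = (1+s_Y^2+s_Z^2+S_2^{YZ})/4$ immediately gives
\[
S_2^{YZ} \;=\; 1 + 2\, s_X^2 - s_Y^2 - s_Z^2
\]
for each cyclic permutation, and the sum of these three identities reproduces the linear condition already recalled in the observation. In particular, the triple $(S_2^{AB}, S_2^{BC}, S_2^{CA})$ for a pure state is completely determined by $(s_A, s_B, s_C)$.

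For the three nonlinear inequalities, cyclic symmetry reduces the task to proving~\eqref{eq:pure_nonlin1}. Writing $p = S_2^{AB}$, $q = S_2^{BC}$, $r = S_2^{CA}$ and focusing on the case $\sqrt{p} + \sqrt{q} - \sqrt{r} \geq 0$ (the opposite case being trivial), I would square both sides of~\eqref{eq:pure_nonlin1}, insert $p+q+r = 3$, and simplify to the equivalent monogamy-type condition
\[
\sqrt{pq} \;\leq\; r + \sqrt{3r}.
\]

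The remaining step is to verify this reduced inequality on the entire admissible domain of $(s_A, s_B, s_C)\in[0,1]^3$. For this I would invoke the polygon inequalities $s_X + s_Y \leq 1 + s_Z$ (for all cyclic permutations), which are known to characterize exactly the triples of marginal Bloch lengths realized by three-qubit pure states, and then carry out the algebraic check. The main obstacle is this final step: the naive AM--GM bound $\sqrt{pq} \leq (p+q)/2 = (3-r)/2$ is sharp enough to close the proof only when $r \geq 1/3$, so for smaller $r$ the polygon constraints must be used in an essential (non-loose) way, rather than only the positivity of $p, q, r$. The biseparable states $|\phi\rangle_A \otimes |\Psi\rangle_{BC}$ with $|\Psi\rangle$ maximally entangled (and the symmetric choice $|\Phi\rangle_{AB} \otimes |\chi\rangle_C$) realize $(p,q,r) = (0,3,0)$ and $(3,0,0)$, saturating~\eqref{eq:pure_nonlin1} and pinning down the extremal configurations that guide the algebraic manipulation.
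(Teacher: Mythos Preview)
Your setup uses exactly the same two ingredients as the paper's proof: the equal-purity relation $\tr(\vr_X^2)=\tr(\vr_{YZ}^2)$ and the polygon inequalities $s_X+s_Y\le 1+s_Z$ on the single-qubit Bloch lengths (this is precisely the result $\sqrt{S_1^X}+\sqrt{S_1^Y}-\sqrt{S_1^Z}\le 1$ that the paper quotes from Ref.~\cite{morelli2023correlation}). The difference is in how the polygon inequality is connected to the target inequality, and here your route leaves a genuine gap while the paper's does not.

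You square the target and reduce to $\sqrt{pq}\le r+\sqrt{3r}$, which you then propose to check against the polygon constraints on $(s_A,s_B,s_C)$; as you yourself note, AM--GM only closes the case $r\ge 1/3$, and you do not actually carry out the remaining case. The paper avoids this detour entirely by bringing in the three-body sector length. Combining $S_1^A+S_1^B+S_1^C+S_3^{ABC}=4$ with the purity relation gives the clean identity
\[
s_X^2 \;=\; S_1^X \;=\; \tfrac{1}{3}\bigl(S_2^{YZ}+\Delta\bigr),\qquad \Delta:=3-S_3^{ABC},
\]
so that $\sqrt{S_2^{YZ}+\Delta}=\sqrt{3}\,s_X$. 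The polygon inequality $s_C+s_A-s_B\le 1$ is then \emph{literally} the inequality of Observation~\ref{ob:proof_threedim} divided by $\sqrt{3}$, with no squaring or case analysis at all. Observation~\ref{ob:proof_pure} follows because among all the $\Delta$-shifted constraints the one at $\Delta=0$ is the least restrictive. In short: your expression $S_2^{YZ}=1+2s_X^2-s_Y^2-s_Z^2$ mixes all three Bloch lengths and makes the square roots intractable, whereas adding the $S_3$-dependent offset $\Delta$ isolates a single $s_X^2$ under each square root and turns the whole statement into the polygon inequality itself.
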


The proof of this Observation is given in Appendix~\ref{ap:proof_pure} and the set of points attainable by pure three-qubit states is displayed in Fig.~\ref{fig:proof_pure}.

\begin{figure}[t]
    \centering
    \includegraphics[width=0.8\columnwidth]{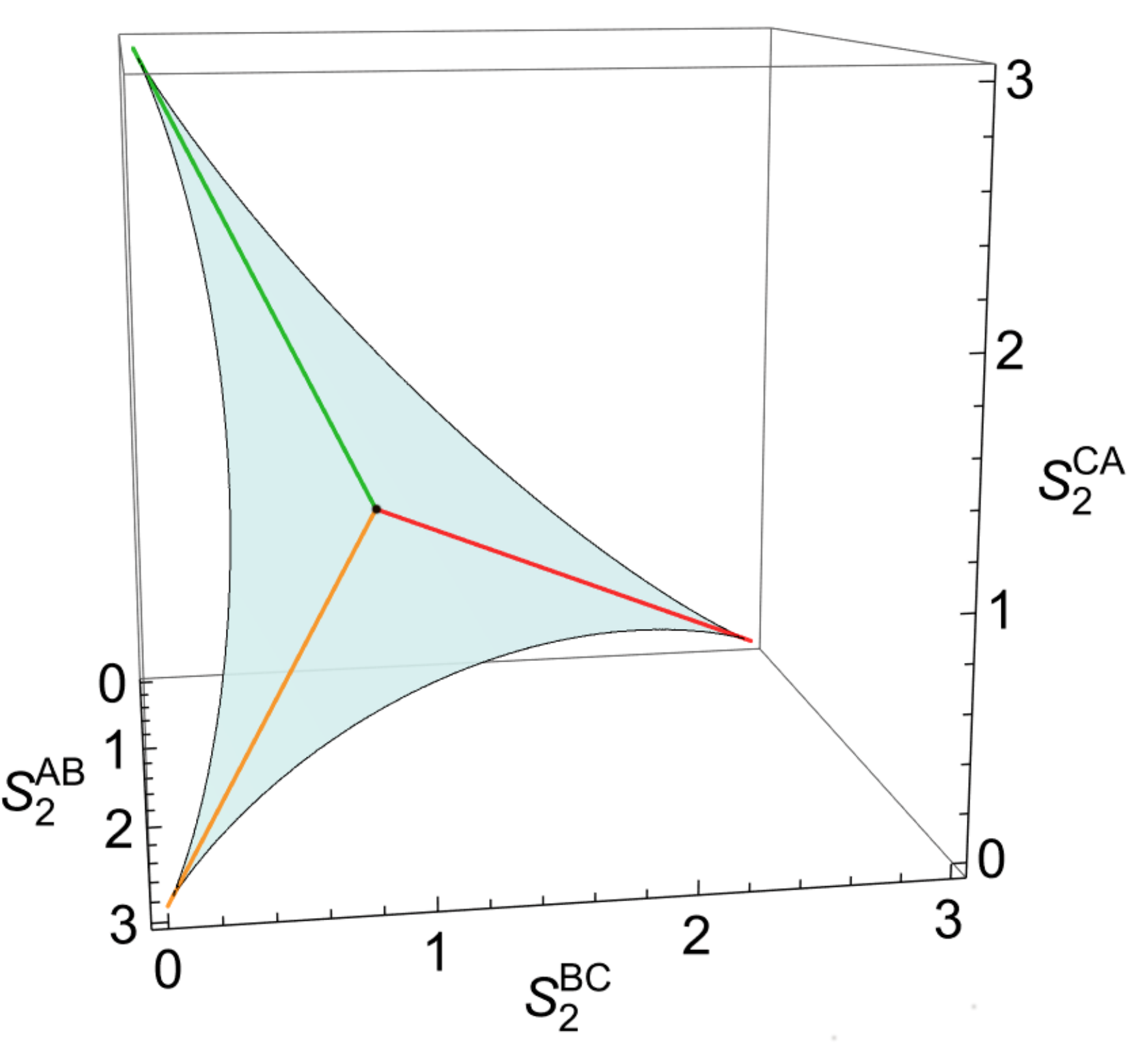}
    \caption{
    The nonlinear area of the pure three-qubit states in the coordinate space $(S^{AB}_2, S^{BC}_2, S^{CA}_2)$ based on Observation~\ref{ob:proof_pure}.
    The Orange, Red, and Green lines, respectively represent the states that are biseparable in the $AB|C$, $BC|A$, and $CA|B$ partitions.
    The black dot contains the fully separable states}
    \label{fig:proof_pure}
\end{figure}

There are several noteworthy facts about Observation~\ref{ob:proof_pure}. 
The non-linear inequalities in Eqs.~(\ref{eq:pure_nonlin1})-(\ref{eq:pure_nonlin3}) describe a two-dimensional region that is strictly smaller than the previously known region described by $S_2 = 3$, see Fig.~\ref{fig:proof_pure}. This implies that the amount of two-qubit correlations in a three-qubit state constrain each other in a non-linear fashion. In fact, the inequalities are tight, for every point in this region there exists a pure state that is mapped to this point. Each inequality in Eqs. ~ (\ref{eq:pure_nonlin1})-(\ref{eq:pure_nonlin3}) is correspondingly saturated by  the following state families belonging to the W class:
\begin{subequations}
    \begin{align}
     \begin{split}
         \label{eq:pure_1_state}
          \ket{\xi_1 (p)}&= p\ket{001}+f_p^+\ket{010} +f_p^-\ket{100},
     \end{split}\\
     \begin{split}
     \label{eq:pure_2_state}
           \ket{\xi_2 (p)} &= p\ket{010}+f_p^+\ket{100}
          +f_p^-\ket{001},      
     \end{split} \\
     \begin{split}
     \label{eq:pure_3_state}
           \ket{\xi_3 (p)} &= p\ket{100}+f_p^+\ket{001}
          +f_p^-\ket{010},
     \end{split} 
 \end{align}
\end{subequations}
where $0\le p\le1/\sqrt{2}$ and $f_p^{\pm} = (\sqrt{2- 3p^2}\pm p)/2$. Note that there exist mixed states that also satisfy Eqs.~ (\ref{eq:pure_nonlin1})-(\ref{eq:pure_nonlin3}) and $S_2 = 3$. One such example is the state $\vr = \frac{1}{4} \left(\ket{00} + \ket{11}\right)\left(\bra{00} + \bra{11}\right) \otimes \mathds{1}$, which has the coordinates $(3,0,0)$. 

If a state is separable with respect to a bipartition $A|BC$, {written as $\ket{\psi_{A|BC}}=\ket{\psi_{A}} \otimes\ket{\psi_{BC}}$}, then it follows that $S^{AB}_2 = S^{CA}_2 \leq 1$.
This can be shown using the condition $S^{AB}_2 + S^{BC}_2 + S^{CA}_2 = 3$ and the product formula $S_2^{XY}(\vr_{X}\otimes \vr_{Y}) = S_1^X(\vr_X)S_1^Y(\vr_Y)$ \cite{wyderka2020characterizing} for single-particle {states} $\vr_{X}$ and $\vr_{Y}$.
Similarly, one can find the cases for the other two bipartitions $B|CA$ and $C|AB$.
These separability conditions for $A|BC$, $B|CA$, and $C|AB$ are respectively represented by the Red, Green, and Orange lines in Fig.~\ref{fig:proof_pure}, where a state on each line corresponds, up to local unitaries, to one of the following biseparable states
\begin{equation}\label{eq:pure_bisep}
    \ket{\phi(\theta)}_{X|YZ}= \ket{0}_X \otimes (\cos{(\theta)}\ket{00}_{YZ} + \sin{(\theta)}\ket{11}_{YZ})
\end{equation}
for $\theta \in [0, \pi/2]$ and $X,Y = A,B,C$.
Since it has to obey all three conditions, any full-product state $\ket{\psi_{\text{fs}}} = \ket{a}\otimes\ket{b}\otimes\ket{c}$ is mapped to the center point $(1,1,1)$ of the shape.
Any pure state mapped outside of these lines is genuinely three-qubit entangled.

However, the converse is not true. There exist entangled states that are mapped to the biseparable lines.
If a three-qubit state $\vr_{ABC}$ is permutationally invariant under the exchange of two qubits $X, Y =A, B, C$, that is, $F_{XY} \vr_{ABC} F_{XY} = \vr_{ABC}$ with the flip operator between $X, Y$ qubits, then it holds that $S_2^{XZ} = S_2^{YZ}$, which can be immediately shown using $F (A \otimes B) F = B \otimes A$ for any operators $A, B$.
States that are invariant under the exchange of all three qubits such as the symmetric Greenberger–Horne–Zeilinger (GHZ) state $\ket{\text{GHZ}}=(\ket{000}+\ket{111})/\sqrt{2}$ and the W state $\ket{\text{W}}=(\ket{001}+\ket{010}+\ket{100})/\sqrt{3}$ are mapped to the point $(1,1,1)$.
This means that any pure state with some {exchange}-symmetry cannot be distinguished from a biseparable or fully separable state from its two-body correlations alone.
In the following, we will overcome this issue by adding the three-body correlations as another {parameter}.

To proceed, let us first define the three-body sector length as
\begin{equation}
    S_3^{ABC}
    = \sum_{i,j,k = 1}^3
    \braket{\sigma_i \otimes \sigma_j \otimes \sigma_k}_{\vr_{ABC}}^2.
\end{equation}
Again, this is invariant under local unitaries.
{We then} make the following observation:
\begin{observation}\label{ob:proof_threedim}
For any three-qubit pure state it holds that 
\begin{equation}\label{eq:pure_3d1}
    S_3^{ABC}\le 3+ \min(S_2^{AB},S_2^{BC},S_2^{CA}),
\end{equation}
and
\begin{subequations}
\label{eq:pure_3d}
\begin{align}
    \sqrt{S_2^{AB} + \Delta} + \sqrt{S_2^{BC}+ \Delta}
    - \sqrt{S_2^{CA}+ \Delta}
    &\le \sqrt{3},\\
    \sqrt{S_2^{AB} + \Delta} - \sqrt{S_2^{BC}+ \Delta}
    + \sqrt{S_2^{CA}+ \Delta}
    &\le \sqrt{3},\\
    -\sqrt{S_2^{AB} + \Delta} + \sqrt{S_2^{BC}+ \Delta}
    + \sqrt{S_2^{CA}+ \Delta}
    &\le \sqrt{3},
\end{align}
\end{subequations}
where $\Delta \equiv 3 - S_3^{ABC} \in [-1, 2]$.
\end{observation}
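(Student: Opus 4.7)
The plan is to reduce Observation~\ref{ob:proof_threedim} to elementary identities on the three single-qubit Bloch-vector lengths $\|\vec{r}_X\|$ ($X=A,B,C$) of the marginals, combined with the polygon inequality for the single-qubit marginals of a pure three-qubit state (Higuchi--Sudbery--Szulc).

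First, I would parametrize everything in these lengths. Using $\tr(\vr_X^2)=(1+\|\vec{r}_X\|^2)/2$ and $\tr(\vr_{XY}^2)=(1+\|\vec{r}_X\|^2+\|\vec{r}_Y\|^2+S_2^{XY})/4$, together with the pure-state identity $\tr(\vr_{XY}^2)=\tr(\vr_Z^2)$ coming from the Schmidt decomposition across the cut $XY|Z$, one obtains $S_2^{AB}=1+2\|\vec{r}_C\|^2-\|\vec{r}_A\|^2-\|\vec{r}_B\|^2$ and two cyclic analogues. Combined with the pure-state sum rule $S_1+S_2+S_3=7$ (which is just $\tr(\vr^2)=1$ expanded in the Pauli basis), this gives $S_3^{ABC}=4-\|\vec{r}_A\|^2-\|\vec{r}_B\|^2-\|\vec{r}_C\|^2$, so that
\begin{equation}
\Delta \;=\; 3-S_3^{ABC} \;=\; \|\vec{r}_A\|^2+\|\vec{r}_B\|^2+\|\vec{r}_C\|^2-1,
\end{equation}
and the claimed range $\Delta\in[-1,2]$ is immediate from $\|\vec{r}_X\|^2\in[0,1]$.

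With this parametrization the first inequality \eqref{eq:pure_3d1} is essentially free: direct substitution gives $S_3^{ABC}-3-S_2^{AB}=-3\|\vec{r}_C\|^2\le 0$ and likewise for the cyclic variants, so $S_3^{ABC}\le 3+\min(S_2^{AB},S_2^{BC},S_2^{CA})$, with equality precisely when the complementary single-qubit marginal is maximally mixed. For the three shifted bounds in~\eqref{eq:pure_3d}, the key simplification is
\begin{equation}
S_2^{AB}+\Delta=3\|\vec{r}_C\|^2,\qquad S_2^{BC}+\Delta=3\|\vec{r}_A\|^2,\qquad S_2^{CA}+\Delta=3\|\vec{r}_B\|^2,
\end{equation}
so that after dividing by $\sqrt 3$ each of the three inequalities becomes a polygon-type constraint on the Bloch-vector lengths, e.g.\ $\|\vec{r}_A\|+\|\vec{r}_C\|\le 1+\|\vec{r}_B\|$.

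The main obstacle, and the only genuinely nontrivial step, is establishing these three polygon inequalities. I would invoke the Higuchi--Sudbery--Szulc theorem: the smaller eigenvalues $\lambda_X=(1-\|\vec{r}_X\|)/2$ of the single-qubit marginals of a pure many-qubit state satisfy $\lambda_X\le\lambda_Y+\lambda_Z$ for every permutation, which rearranges to exactly $\|\vec{r}_Y\|+\|\vec{r}_Z\|\le 1+\|\vec{r}_X\|$. If a self-contained argument is preferred, the $n=3$ case can be reproven directly from the three Schmidt decompositions across the cuts $X|YZ$, which force the nonzero spectrum of each two-qubit marginal to coincide with that of the opposite single-qubit marginal; an elementary majorization argument on how the single-qubit marginals nest inside a rank-at-most-two two-qubit state then yields the polygon inequality and closes the proof.
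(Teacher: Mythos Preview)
Your proposal is correct and follows essentially the same route as the paper: both reduce the shifted inequalities to the polygon constraint $\sqrt{S_1^X}+\sqrt{S_1^Y}-\sqrt{S_1^Z}\le 1$ on the single-qubit Bloch lengths via the identity $S_2^{XY}+\Delta=3S_1^Z$, and both obtain \eqref{eq:pure_3d1} from $S_1^Z\ge 0$. The only cosmetic difference is attribution of the polygon inequality---you cite Higuchi--Sudbery--Szulc, while the paper cites the equivalent result of Morelli \emph{et al.}~\cite{morelli2023correlation}.
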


The proof of this Observation is also given in Appendix~\ref{ap:proof_pure} and the geometrical object defined by these inequalities is shown in Fig.~\ref{fig:proof_threedim}.

\begin{figure}
    \centering
    \includegraphics[width = \columnwidth]{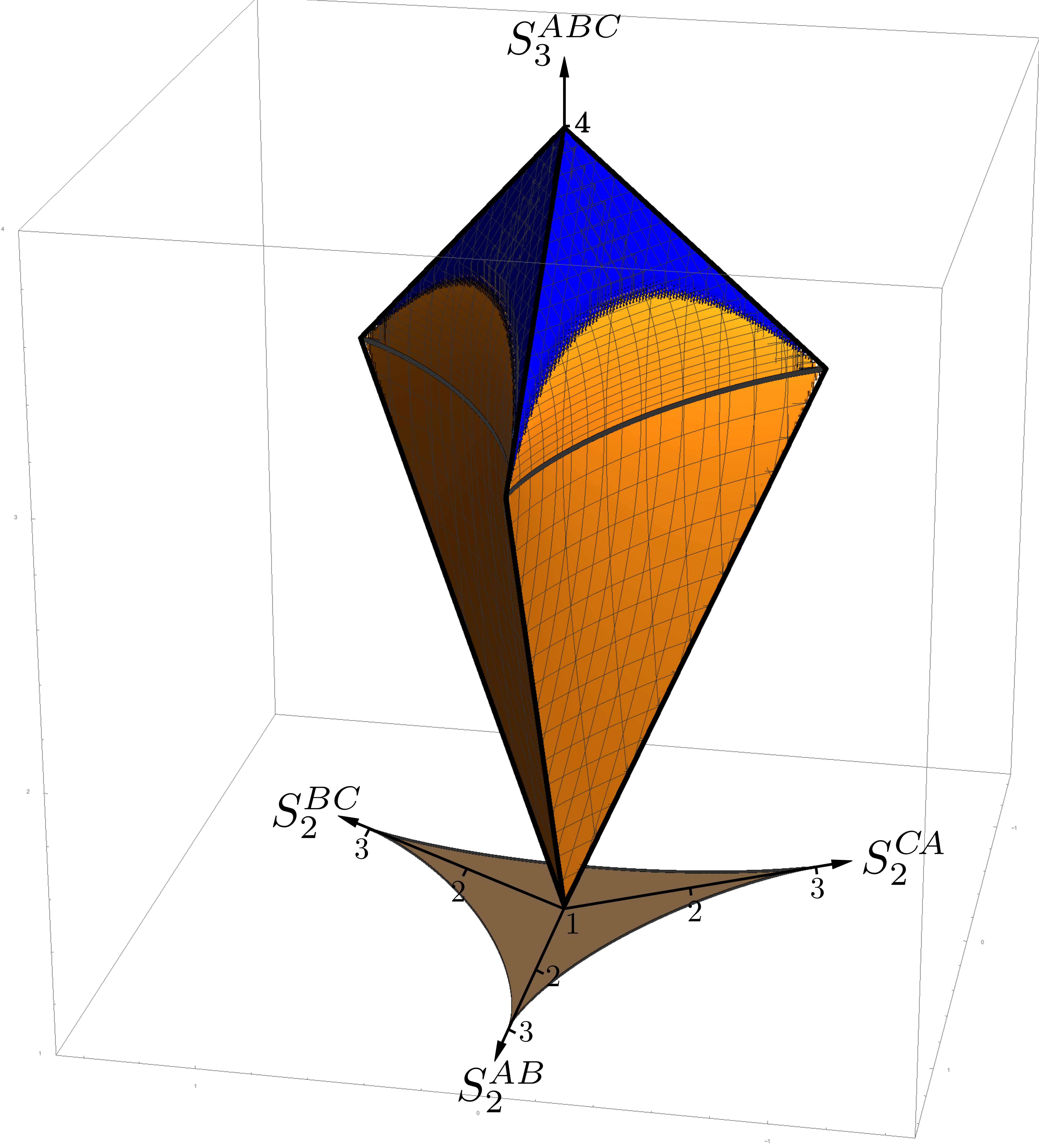}
    \caption{{The space of all two- and three-body correlations for pure three-qubit states based on Observation~\ref{ob:proof_threedim}. The figure shows the restrictions of the two-body correlations as in Fig.~\ref{fig:proof_pure} (gray shadow), extended by an additional coordinate, the three-body correlation $S_3^{ABC}$. The resulting object lies in a three-dimensional subspace of $\mathbb{R}^4$. For simplicity, we will use the four coordinates, but note that this orthonormal basis in $\mathbb{R}^4$ becomes a spanning set for the three-dimensional subspace. For example, the lower corner then has the coordinates $(1,1,1,1)$, and the upper corner has the coordinates $(1,1,1,4)$. The blue region is bounded by Eq.~\eqref{eq:pure_3d1} and the orange boundaries are given by Eq.~\eqref{eq:pure_3d}.}
    }
    \label{fig:proof_threedim}
\end{figure}

We have several remarks on Observation~\ref{ob:proof_threedim}.
First, similarly to the result in Observation~\ref{ob:proof_pure}, these non-linear inequalities are tight, they exactly describe the set of all points corresponding to pure states.
The boundary (colored orange in Fig.~\ref{fig:proof_threedim}) is attained by states of the form
\begin{equation}
    \label{eq:pure_3d2}
  \ket{\Xi(x,y)}= \sqrt{\frac{x-y}{2}}|001\rangle+\sqrt{\frac{1+y}{2}}|010\rangle+\sqrt{\frac{1-x}{2}}|100\rangle,
\end{equation}
where $-1 < y \leq x <  1$.
As shown in Fig.~\ref{fig:proof_threedim}, the intersection with horizontal planes of height $S_3^{ABC}$ is minimized for $S_3 = 4$ and $S_3^{ABC} = 1$ and maximized for $S_3^{ABC} = 3$. Thus the projection onto a horizontal plane (Observation~\ref{ob:proof_pure}) gives the same shape as the intersection with the plane at $S_3^{ABC} = 3$.
Observation~\ref{ob:proof_threedim} shows that there is a fundamental limitation in quantum mechanics regarding the allowable differences between two-body correlations for given a three-body correlation, which can be represented geometrically as the area of a hyperbolic triangle formed by the intersection of a hyperbolic triangular bipyramid with a plane.

Second, any pure separable state with respect to the bipartition $A|BC$ satisfies $S_3^{ABC} \leq 3$ {and} is mapped to the line connecting the point $(1,1,1,1)$ with $(3,0,0,3)$. To see this, note that the local states in $A$ and $BC$ have to be pure and therefore $S_1^{A}=1$ and $S_2^{BC}=3-S_1^{B}-S_1^{C}$. So it follows that $S_2^{BC}=S_3^{ABC}$, since for pure states it holds  that $S_3^{ABC}+S_1^{A}+S_1^{B}+S_1^{C}=4$~\cite{wyderka2020characterizing}. The same holds true for the other partitions.
From this, it immediately follows that all pure fully separable states satisfy $S_3^{ABC} = 1$ and are mapped to the single point $(1,1,1,1)$. This also follows from the fact that all three-qubit fully separable states satisfy $S_3 \leq 1$~\cite{wyderka2020characterizing, imai2021bound}.
Thus, by including the $S_3^{ABC}$ coordinate, we are able to clearly distinguish between fully separable, biseparable, and multipartite entangled states.
For example the GHZ state has now the coordinates $(1,1,1,4)$ and the W state has the coordinates $(1,1,1,11/3)$.

\section{Mixed states}\label{sec:mixed}
In this section, we extend our previous analysis to mixed states and characterize the set of two-body correlations $(S^{AB}_2, S^{BC}_2, S^{CA}_2)$ attainable by various classes of mixed three-qubit states $\vr_{ABC}$.
Let us begin by noting the convexity property of the two-body sector lengths:
$S_2(\vr_{ABC}) \leq \sum_i p_i S_2(\vr_i)$ for $\vr_{ABC} = \sum_i p_i \vr_i$ (this holds for all $k$-body sector lengths).
Thus, any three-qubit state obeys that $S_2(\vr_{ABC}) \leq 3$.
This means that no three-qubit mixed state can go outside the boundaries of pure states in the state space.

For any three-qubit state the inequality $S^{AB}_2 + S^{BC}_2 + S^{CA}_2 \le 3$ holds.
We further conjecture that the same three non-linear inequalities as for pure states given in Observation~\ref{ob:proof_pure} hold:
\begin{conjecture}\label{conj:mixed_total _nonlin}
For any three-qubit state it holds that
\begin{subequations}
       \begin{align}
        \begin{split}
            \label{Full_ss_b}
            \sqrt{S_2^{AB}} + \sqrt{S_2^{AC}} - \sqrt{S_2^{BC}}
            &\leq \sqrt{3}, 
        \end{split} \\
        \begin{split}
            \label{Full_ss_c}
            \sqrt{S_2^{AB}} - \sqrt{S_2^{AC}} + \sqrt{S_2^{BC}}
            &\leq \sqrt{3}, 
        \end{split} \\
        \begin{split}
            \label{Full_ss_d}
            -\sqrt{S_2^{AB}} + \sqrt{S_2^{AC}} + \sqrt{S_2^{BC}}
            &\leq \sqrt{3}.
        \end{split}
       \end{align}
    \end{subequations}
\end{conjecture}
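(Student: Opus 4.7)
The plan is to reduce the conjecture to the pure-state bound of Observation~\ref{ob:proof_pure} via a pure-state ensemble decomposition. Concretely, for any decomposition $\vr_{ABC} = \sum_i p_i \ketbra{\psi_i}{\psi_i}$, the correlation matrices satisfy $T^{XY}(\vr) = \sum_i p_i T^{XY}(\psi_i)$, where $T^{XY}_{jk} = \braket{\sigma_j\otimes\sigma_k}_{\vr_{XY}}$ and $\sqrt{S_2^{XY}(\vr)} = \|T^{XY}(\vr)\|_F$. The Frobenius triangle inequality then gives the convexity bound $\sqrt{S_2^{XY}(\vr)} \leq \sum_i p_i \sqrt{S_2^{XY}(\psi_i)}$ for each pair.

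First I would apply Observation~\ref{ob:proof_pure} to each pure component $\ket{\psi_i}$, multiply by $p_i$, and sum to obtain
\begin{equation*}
\sum_i p_i\sqrt{S_2^{AB}(\psi_i)} + \sum_i p_i\sqrt{S_2^{AC}(\psi_i)} \leq \sqrt{3} + \sum_i p_i\sqrt{S_2^{BC}(\psi_i)}.
\end{equation*}
Combining the first two terms with the Frobenius triangle inequality upper-bounds $\sqrt{S_2^{AB}(\vr)} + \sqrt{S_2^{AC}(\vr)}$ by the left-hand side. To close the argument, it would remain to show that $\sum_i p_i\sqrt{S_2^{BC}(\psi_i)} \leq \sqrt{S_2^{BC}(\vr)}$ for some suitably chosen decomposition.

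The main obstacle is that for a generic ensemble the Frobenius triangle inequality goes the \emph{wrong way}: $\sqrt{S_2^{BC}(\vr)} \leq \sum_i p_i\sqrt{S_2^{BC}(\psi_i)}$. Overcoming this requires finding a pure-state decomposition in which the reduced correlation matrices $T^{BC}(\psi_i)$ are all non-negative multiples of $T^{BC}(\vr)$, so that the triangle inequality on the $BC$-sector saturates. Existence of such an aligned decomposition should be investigated at the level of purifications: one parameterizes the pure-state ensembles of $\vr_{ABC}$ by the spectral decomposition of $\vr$ conjugated with an arbitrary Stinespring isometry, and optimizes over this unitary freedom. Establishing that such an aligned ensemble always exists---or identifying a weaker alignment condition that still suffices---is where I expect most of the work to lie.

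Should a purely algebraic decomposition argument fail, a fallback is to use the identity $S_2^{XY} = 4\tr(\vr_{XY}^2) - 1 - S_1^X - S_1^Y$ to restate the conjecture as an inequality among reduced-state purities, and then exploit monogamy-type constraints on the three marginal purities of a three-qubit state. The tight pure-state witnesses $\ket{\xi_i(p)}$ identified in Sec.~\ref{sec:pure} would then serve as a guide for the expected equality conditions in this reformulation and for numerical tests of candidate boundary states among mixed ensembles.
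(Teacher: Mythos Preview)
First, a point of context: in the paper this statement is a \emph{conjecture}, not a theorem. Appendix~\ref{ap:mixed_total _nonlin} does not contain a proof; it parameterizes an arbitrary rank-two state via Ac\'in's canonical form for one pure component and a generic pure state for the other, and then reports that numerical maximization of the left-hand sides over the resulting $21$ parameters never exceeds $\sqrt{3}$. Higher ranks are left entirely open. So there is no analytic argument in the paper for you to match; any genuine proof you produce would go beyond what the authors establish.

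On your approach: the diagnosis is correct, and so is your identification of the obstruction. The map $\vr\mapsto\sqrt{S_2^{XY}(\vr)}=\|T^{XY}(\vr)\|_F$ is convex, so the two positively-signed terms behave well under mixing, but the negatively-signed term does not, and the whole expression is not convex. Your proposed remedy---an ensemble in which every $T^{BC}(\psi_i)$ is a non-negative multiple of $T^{BC}(\vr)$---would indeed saturate the reverse triangle inequality on that sector. The gap is that you give no mechanism for producing such an ensemble, and a parameter count makes its generic existence doubtful: alignment of a $3\times 3$ real matrix imposes eight real constraints per pure component, whereas the unitary freedom in a rank-$r$ decomposition into $k\ge r$ terms carries only $2kr-r^2$ real parameters, and those parameters must simultaneously respect the full three-qubit structure, not just the $BC$ marginal. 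Absent a structural reason (e.g.\ a symmetry or a fixed-point argument) for such an aligned decomposition to exist, this step is speculative, and I would expect explicit low-rank counterexamples to the alignment hypothesis even when the conjectured inequality itself holds.

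Your fallback---rewriting $S_2^{XY}$ via reduced purities---is closer in spirit to how the paper handles the pure case (Appendix~\ref{ap:proof_pure}), but note that the key identity there, $S_1^{Z}=1+\tfrac{1}{3}S_2^{XY}-\tfrac{1}{3}S_3^{ABC}$, relies on the Schmidt-decomposition equality $\tr(\vr_X^2)=\tr(\vr_{\bar X}^2)$, which fails for mixed states. So that route also needs a genuinely new ingredient. In short: your proposal correctly isolates why a convexity argument does not close, but neither branch you outline currently supplies the missing idea, and the paper itself leaves the statement open.
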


The corresponding set $(S^{AB}_2, S^{BC}_2, S^{CA}_2)$ is displayed in Fig.~\ref{fig:conj_anymix}.
There is some evidence indicating the validity of this conjecture.
One expects states that form the boundary of the state space to be of low rank, whereas states increase in rank when coming closer to the origin $(0,0,0)$, see Sec.~\ref{sec:rank}.
But Conjecture~\ref{conj:mixed_total _nonlin} holds true for rank-two mixed states, for details see Appendix~\ref{ap:mixed_total _nonlin}.

\begin{figure}[t]
    \centering
    \includegraphics[width=0.8\columnwidth]{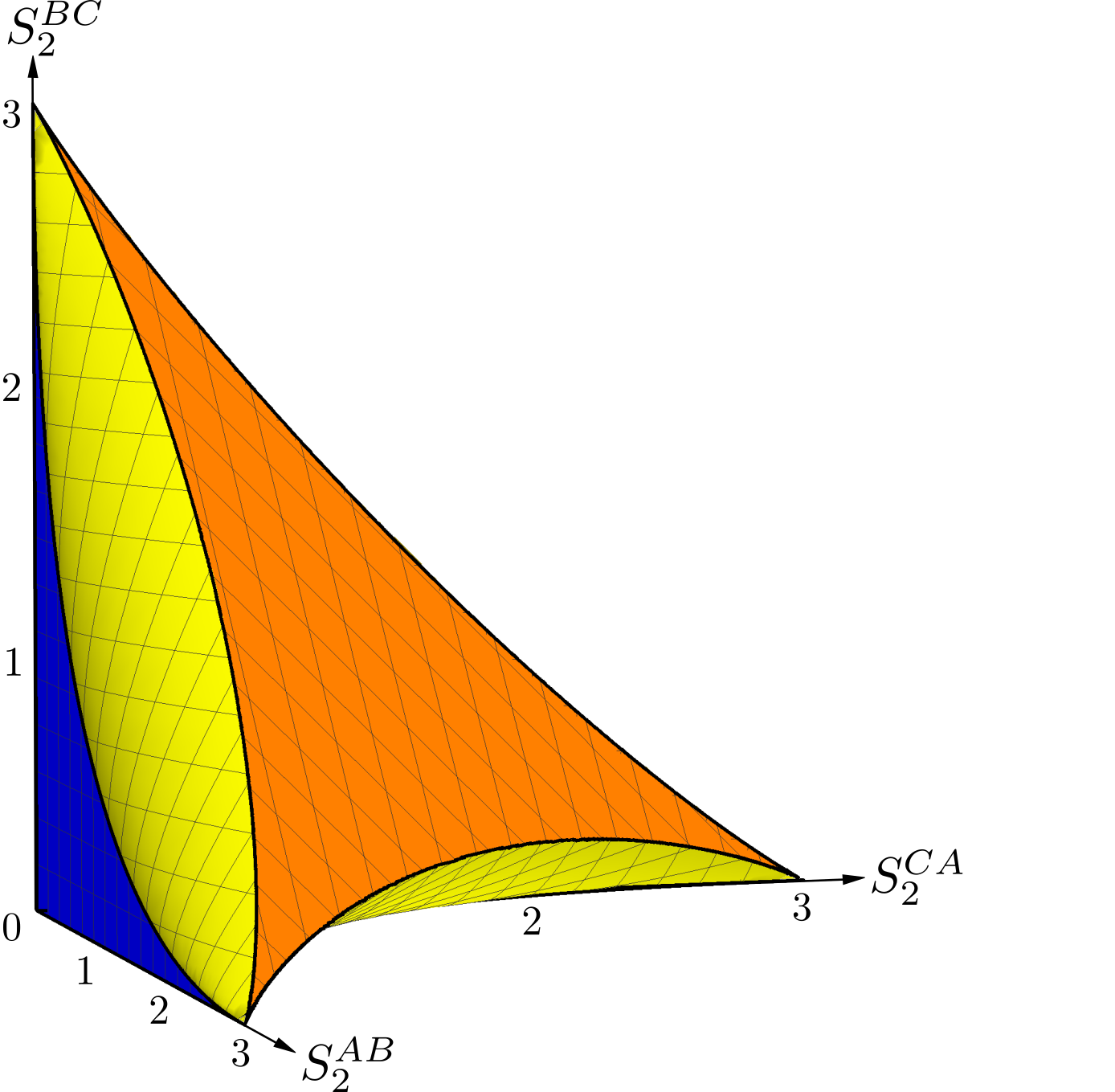}
    \caption{The conjectured state space of mixed states of three qubits under two-body sector lengths based on Conjecture \ref{conj:mixed_total _nonlin}. }
    \label{fig:conj_anymix}
\end{figure}

In case the inequalities~(\ref{Full_ss_b}, \ref{Full_ss_d}, \ref{Full_ss_d}) hold, they would be tight, as they are saturated by the states
\begin{subequations}
    \begin{align}
        \begin{split}
            \kappa_{1}(p,q)
            &= q \ket{\xi_1(p)}\!\bra{\xi_1(p)}
            + \Bar{q} \ket{\xi_1(\Bar{p})}\!\bra{\xi_1(\Bar{p})},
        \end{split} \\
        \begin{split}
            \kappa_{2}(p,q)
            &= q \ket{\xi_2(p)}\!\bra{\xi_2(p)}
            + \Bar{q} \ket{\xi_2(\Bar{p})}\!\bra{\xi_2(\Bar{p})},
        \end{split} \\
        \begin{split}
            \kappa_{3}(p,q)
            &= q \ket{\xi_3(p)}\!\bra{\xi_3(p)}
            + \Bar{q} \ket{\xi_3(\Bar{p})}\!\bra{\xi_3(\Bar{p})},
        \end{split} 
    \end{align}
\end{subequations}
where $\Bar{q} = 1-q$, $\Bar{p} = 1-p$ and $\ket{\xi_1(p)},\ket{\xi_2(p)},\ket{\xi_3(p)}$ are defined in Eqs.~(\ref{eq:pure_1_state}, \ref{eq:pure_2_state}, \ref{eq:pure_3_state}).
That is, together with $S_2 \le 3$ they would completely characterize the set of points $(S^{AB}_2, S^{BC}_2, S^{CA}_2)$ attainable by three-qubit states.

A mixed state is called fully separable if it can be written in the form
\begin{equation}
    \vr_{\text{fs}} =
    \sum_k p_k \vr_k^A \otimes \vr_k^B \otimes \vr_k^C,
\end{equation}
with $p_k \in [0,1]$ and $\sum_k p_k =1$.
For example, the maximally mixed state $\eins_8/8$ sits at the origin $(0,0,0)$.
Intuitively, a fully separable state should have small two-body correlations, and thus the set $(S^{AB}_2, S^{BC}_2, S^{CA}_2)$ attainable by fully separable states may be expected to sit close to the origin.
Indeed, we present the following inequalities:
\begin{observation}\label{ob:Fully_separable_mixed}
Any fully separable three-qubit state obeys
\begin{subequations}
\label{fs}
 \begin{align}
     \begin{split}
     \label{fs_b}
         S^{AB}_2 + S^{BC}_2 - S^{CA}_2 &\leq 1,
     \end{split} \\
    \begin{split}
     \label{fs_c}
           S^{AB}_2 - S^{BC}_2 + S^{CA}_2 &\leq 1,
     \end{split} \\
     \begin{split}
     \label{fs_d}
           -  S^{AB}_2 + S^{BC}_2 + S^{CA}_2 &\leq 1.
     \end{split}
 \end{align}
\end{subequations}
These criteria are optimal, the fully separable states fill the whole set defined by those inequalities.
\end{observation}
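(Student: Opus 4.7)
The plan has two parts: establish both that (i) any fully separable three-qubit state $\vr$ satisfies the three inequalities in Eq.~(\ref{fs}), and (ii) every point of the resulting polytope is realised by some fully separable state. For~(i) I would decompose $\vr=\sum_m r_m\,\vr_m^A\otimes\vr_m^B\otimes\vr_m^C$ into pure product states, with single-qubit Bloch vectors $\vec{a}_m,\vec{b}_m,\vec{c}_m$ on the unit sphere. Expanding $\braket{\sigma_i\otimes\sigma_j}_{\vr_{XY}}^2$ yields the identity
\begin{equation*}
S_2^{XY}=\sum_{m,n} r_m r_n\,(\vec{x}_m\cdot\vec{x}_n)(\vec{y}_m\cdot\vec{y}_n),
\end{equation*}
so that, writing $\alpha_{mn},\beta_{mn},\gamma_{mn}\in[-1,1]$ for the three pairwise Bloch inner products, Eq.~(\ref{fs_b}) reduces to the elementary claim $\alpha\beta+\beta\gamma-\gamma\alpha\le 1$ on $[-1,1]^3$. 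The expression is linear in $\beta$, so its maximum equals $|\alpha+\gamma|-\alpha\gamma$, and this equals either $1-(1-\alpha)(1-\gamma)$ or $1-(1+\alpha)(1+\gamma)$ according to the sign of $\alpha+\gamma$; both are at most $1$. Summing over $m,n$ and using $\sum_{m,n}r_m r_n=1$ concludes~(i); the other two inequalities follow by permuting $A,B,C$.

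For~(ii) the central tool is the rescaling identity $\vr_\lambda=\lambda\vr+(1-\lambda)\eins/8$, which multiplies every two-body correlator by $\lambda$ and hence $S_2^{XY}(\vr_\lambda)=\lambda^2\,S_2^{XY}(\vr)$. Since $\eins/8$ realises $(0,0,0)$, the set of achievable coordinates is star-shaped with respect to the origin, and it therefore suffices to cover the three ``upper'' triangular faces of the polytope that meet at $(1,1,1)$. By permutation symmetry I would treat the face $S_2^{AB}+S_2^{BC}-S_2^{CA}=1$ using the two-term product-state mixture
\begin{equation*}
\vr=p\,\ket{0\,b_1\,c}\!\bra{0\,b_1\,c}+(1-p)\,\ket{1\,b_2\,c}\!\bra{1\,b_2\,c},
\end{equation*}
which has $\vec{a}_1\cdot\vec{a}_2=-1$, $\vec{c}_1=\vec{c}_2$, and $\beta\equiv\vec{b}_1\cdot\vec{b}_2\in[-1,1]$ free. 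Plugging into the identity from~(i) gives $(S_2^{AB},S_2^{BC},S_2^{CA})=(1-s(1+\beta),\,1-s(1-\beta),\,1-2s)$ with $s=2p(1-p)\in[0,1/2]$; matching a target $(x,y,z)$ on the face forces $s=(1-z)/2$ and $\beta=(1-2x+z)/(1-z)$, and the constraint $|\beta|\le 1$ turns out to be equivalent to $0\le z\le x\le 1$, which holds automatically on that face.

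The hardest part is the tightness direction. Because the map $\vr\mapsto(S_2^{AB},S_2^{BC},S_2^{CA})$ is quadratic rather than affine, convex combinations of the five vertex states of the polytope do \emph{not} correspond to convex combinations of their images, so one cannot simply convexify vertex constructions. The rescaling trick with $\eins/8$ circumvents this by reducing the task to the outer boundary, and the explicit two-parameter product-state family above handles each upper face, the key consistency check being that $|\beta|\le 1$ is forced throughout the face by the polytope inequalities themselves.
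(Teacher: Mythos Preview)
Your proof is correct and takes a genuinely different route from the paper's. For part~(i) the paper argues via the PPT criterion: since a fully separable $\vr$ has positive partial transpose across every cut, the operators $\sigma_y^{(A)}\vr^{T_A}\sigma_y^{(A)}$ and $\sigma_y^{(B)}\sigma_y^{(C)}\vr^{T_{BC}}\sigma_y^{(B)}\sigma_y^{(C)}$ are again states, and the nonnegativity of $\tr(\vr\,\cdot\,)$ against each yields two linear relations among $S_1^X$, $S_2^{XY}$ and $S_3^{ABC}$; adding them cancels the one- and three-body terms and leaves exactly one of the inequalities in Eq.~(\ref{fs}). Your Bloch-vector expansion together with the scalar bound $\alpha\beta+\beta\gamma-\gamma\alpha\le 1$ on $[-1,1]^3$ is more elementary and self-contained---it never invokes partial transposition---whereas the paper's route has the side benefit that, \emph{before} summing, the two intermediate relations already give the strictly stronger separability criterion of Eq.~(\ref{fs_stronger}), which also involves $S_1$ and $S_3$. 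For tightness~(ii) the paper in fact only exhibits the face-saturating two-term product mixtures $\vr_{\pm\pm\mp}(p,\theta)$ in Eq.~(\ref{eq:fs_s}), which are essentially your construction up to relabelling, without spelling out why covering the upper faces suffices for the whole polytope; your star-shapedness reduction via white-noise admixture makes that step explicit and is a clean addition.
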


The proof is given in Appendix~\ref{ap:Fully_separable_mixed}.
The set defined formed by the fully separable states is displayed in Fig.~\ref{fig:Fully_separable_mixed}, the inequalities in Eqs.~(\ref{fs_b})-(\ref{fs_d}) are respectively represented by the Blue, Green, and Pink facets.
In the case of $S_2 = 3$, this set reduces to the point $(1,1,1)$, recovering the corresponding condition for pure states.
These inequalities are tight, they are saturated by the states
\begin{subequations}
 \label{eq:fs_s}
 \begin{align}
     \begin{split}
     \label{fs_s_b}
       \varrho_{++-}(p,\theta)
       &= p \ket{000}\!\bra{000} 
       + (1-p) \ket{01\theta}\!\bra{01\theta}, 
     \end{split} \\
    \begin{split}
     \label{fs_s_c}
       \varrho_{+-+}(p,\theta)
       &= p \ket{000}\!\bra{000} 
       + (1-p) \ket{0\theta1}\!\bra{0\theta1},
     \end{split} \\
     \begin{split}
     \label{fs_s_d}
      \varrho_{-++}( p,\theta)
      &= p \ket{000}\!\bra{000} 
      + (1-p) \ket{\theta 01}\!\bra{\theta 01},
     \end{split}
 \end{align}
\end{subequations}
where $\ket{\theta}
= \cos{( \theta )} \ket 0 + \sin{(\theta )} \ket 1 $. 

\begin{figure}
  \centering
  \includegraphics[width=0.8\columnwidth]{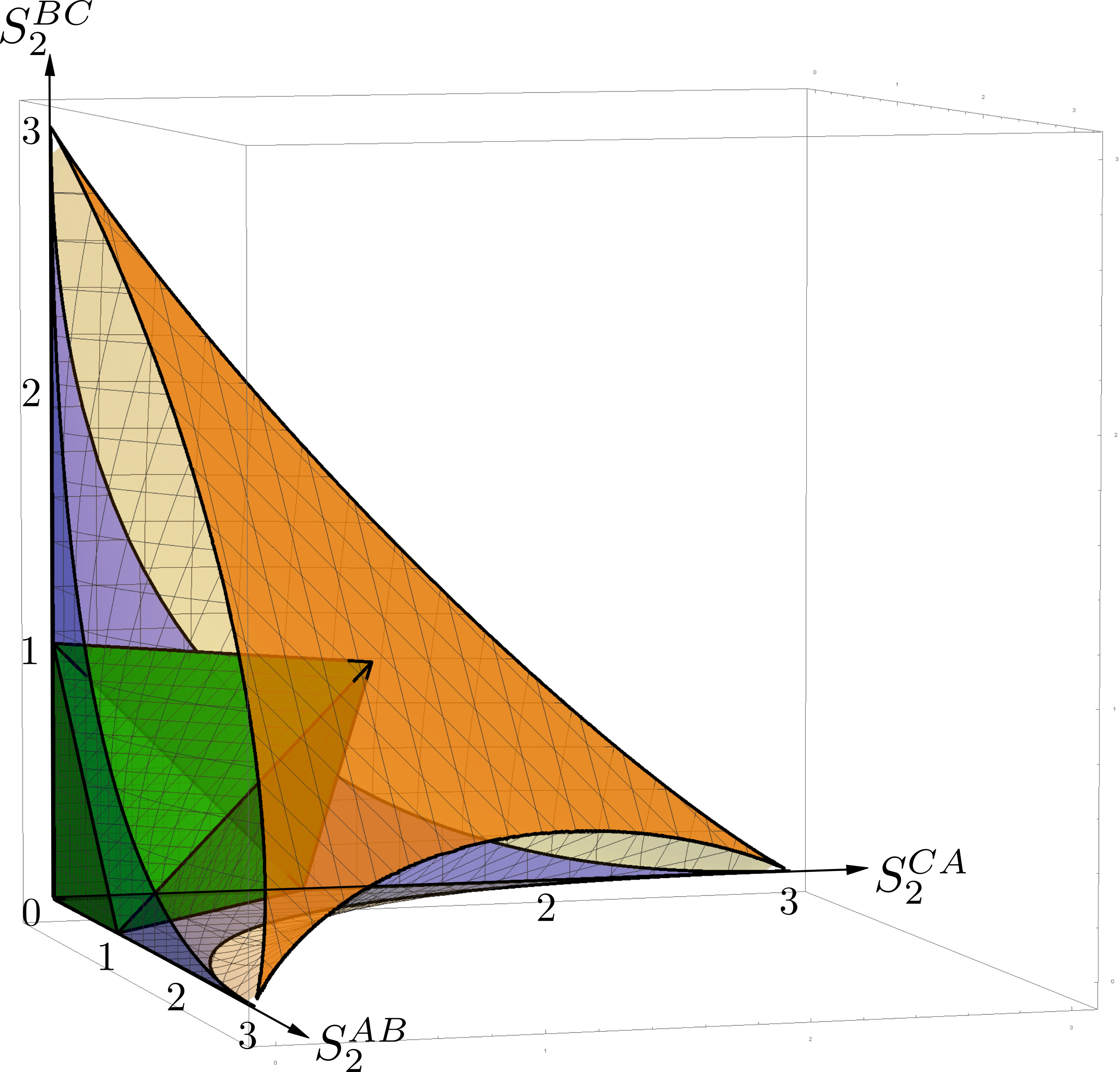}
  \caption{
  The space of the fully separable three-qubit states in the coordinates given by Observation~\ref{ob:Fully_separable_mixed} presented in this paper, along with the pure state in the background. The transparent yellow region specifies the pure state space while the blue is the fully separable region.   }
   \label{fig:Fully_separable_mixed}
\end{figure}

Our criteria only use two-body correlations to refute full separability.
In fact, we will see in Appendix \ref{ap:example_test} that our criterion can be weaker in detecting entanglement than previously known criteria derived in Refs.~\cite{imai2021bound, wyderka2020characterizing}.
So for example, it can detect three-qubit entanglement for noisy asymmetric W states, as shown in Appendix~\ref{ap:example_test}.
But it can never detect entanglement in any permutationally invariant  state, see the previous section, such as mixtures between the symmetric GHZ or W state with white noise.
This may result from the fact that our criterion does not involve three-qubit correlations, {which are included in the other criteria}.
Motivated by this issue, we derive full separability conditions in Eq.~\eqref{fs_stronger} in Appendix \ref{ap:Fully_separable_mixed}, which include 
one- and three-qubit correlations,
which can be stronger than the two previously known criteria, see Appendix \ref{ap:example_test}.

A state is called biseparable for $A|BC$ if it can be written as
\begin{equation}
   \label{bi_sep_form}
    \vr_{A|BC} =
    \sum_k q_k \vr_k^A \otimes \vr_k^{BC},
\end{equation}
with $q_k \in [0,1]$ and $\sum_k q_k =1$.
Now let us propose the following criteria:
\begin{observation}\label{ob:bisep_criteria}
Any three-qubit state which is separable for a bipartition $X|YZ$ obeys
\begin{subequations}
\label{bs_mix}
 \begin{align}
     \begin{split}
         \label{bs_mix_a}
         S^{XY}_2 + S^{ZX}_2 - S^{YZ}_2 &\leq 1,
     \end{split}\\
     \begin{split}
     \label{bs_mix_b}
         S^{XY}_2 , S^{ZX}_2 &\leq 1,
     \end{split} \\
    \begin{split}
     \label{bs_mix_c}
           3 S^{XY}_2 - S^{ZX}_2  + S^{YZ}_2 &\leq 3,
     \end{split} \\
     \begin{split}
       \label{bs_mix_d}
       - S^{XY}_2 + 3 S^{ZX}_2 + S^{YZ}_2 &\leq 3.
     \end{split}
 \end{align}
\end{subequations}
\end{observation}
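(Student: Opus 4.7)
The plan is to decompose the biseparable state into pure product components, $\rho = \sum_k q_k |\alpha_k\rangle\!\langle\alpha_k|_X \otimes |\beta_k\rangle\!\langle\beta_k|_{YZ}$, and analyze each inequality first on these pure components before extending to mixtures via a variance identity. I will parametrize each $\rho_k$ by the single-qubit Bloch vector $\vec r_k$ of $|\alpha_k\rangle$ (of unit length), the marginal Bloch vectors $\vec s_k, \vec u_k$ of $|\beta_k\rangle$, and its correlation matrix $T_k$. The Schmidt equality for pure two-qubit states gives $|\vec s_k|^2 = |\vec u_k|^2 =: b_k$, and purity gives $|\vec s_k|^2 + |\vec u_k|^2 + \|T_k\|_F^2 = 3$, so the pure-component image is $(S^{XY}_2, S^{ZX}_2, S^{YZ}_2)(\rho_k) = (b_k,\, b_k,\, 3 - 2 b_k)$ with $b_k \in [0, 1]$.

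Inequality (\ref{bs_mix_b}) follows immediately: the marginal $\rho_{XY}$ is a separable two-qubit state whose pure product terms satisfy $S^{XY}_2 = b_k \le 1$, so convexity of $S^{XY}_2$ yields $S^{XY}_2(\rho) \le 1$, and likewise for $S^{ZX}_2$. For (\ref{bs_mix_a}), (\ref{bs_mix_c}), (\ref{bs_mix_d}) the central tool is the variance identity
\begin{equation*}
    S^{IJ}_2(\rho) = \sum_k q_k S^{IJ}_2(\rho_k) - V^{IJ}, \qquad V^{IJ} := \sum_{ij} \mathrm{Var}_q\!\bigl(\langle \sigma_i^I \sigma_j^J\rangle_{\rho_k}\bigr) \ge 0,
\end{equation*}
obtained from Jensen's inequality applied Pauli by Pauli. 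Substituting the pure-component values produces an algebraically transparent reduction: (\ref{bs_mix_a}) collapses to the variance inequality $V^{YZ} \le V^{XY} + V^{ZX}$, because the pointwise left-hand side is $4 b_k - 3 \le 1$. Remarkably, the pointwise left-hand sides of (\ref{bs_mix_c}) and (\ref{bs_mix_d}) are identically equal to $3$ on every pure biseparable component (independent of $b_k$), so these two inequalities reduce \emph{entirely} to the variance bounds $V^{ZX} \le 3 V^{XY} + V^{YZ}$ and $V^{XY} \le 3 V^{ZX} + V^{YZ}$.

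The hard part will be proving these three variance inequalities. They must follow from the rigid geometric structure of pure two-qubit states: in Schmidt frames, $T_k = R_Y^{(k)} D_k R_Z^{(k)T}$ with $D_k = \mathrm{diag}\!\bigl(2\sqrt{\lambda_+^{(k)}\lambda_-^{(k)}},\, -2\sqrt{\lambda_+^{(k)}\lambda_-^{(k)}},\, \lambda_+^{(k)} - \lambda_-^{(k)}\bigr)$, while $\vec s_k$ and $\vec u_k$ are the third columns of $R_Y^{(k)}$ and $R_Z^{(k)}$ scaled by $\lambda_+^{(k)} - \lambda_-^{(k)}$. Fluctuations of $T_{ij}^{(k)}$ across $k$ are therefore tied to fluctuations of the Schmidt weights and of the local-unitary frames $R_Y^{(k)}, R_Z^{(k)}$, which are precisely what feed into $V^{XY}$ and $V^{ZX}$ through the bilinear products $r_i^{(k)} s_j^{(k)}$ and $u_i^{(k)} r_j^{(k)}$; a careful bookkeeping of these Schmidt/frame contributions should yield the desired majorizations. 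An alternative route, particularly attractive for (\ref{bs_mix_c}) and (\ref{bs_mix_d}) where the bound is saturated along the entire pure biseparable line $\{(b, b, 3 - 2 b) : b \in [0, 1]\}$, would be to construct explicit semidefinite-programming dual witnesses exploiting positivity of $\rho_{XY}$ and $\rho_{XZ}$ (both separable two-qubit states) together with positivity of the arbitrary two-qubit marginal $\rho_{YZ}$.
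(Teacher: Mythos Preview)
Your reduction of (\ref{bs_mix_c}) and (\ref{bs_mix_d}) to the variance conditions $V^{ZX}\le 3V^{XY}+V^{YZ}$ and $V^{XY}\le 3V^{ZX}+V^{YZ}$ is correct and is in fact exactly equivalent to the paper's decomposition: the paper expands the same quadratic form as $3\sum_i p_i^2 + 2\sum_{i<j}p_ip_j\,g_{ij}$, so that your variance combination equals $2\sum_{i<j}p_ip_j(3-g_{ij})$. The paper then supplies the step you are missing, namely a proof of the \emph{pairwise} bound $g_{ij}\le 3$. It does so by writing $g_{ij}=\langle\psi^j_{X|YZ}|M_i|\psi^j_{X|YZ}\rangle$ for an explicit Hermitian $M_i$ built from the Bloch data of $|\psi^i_{X|YZ}\rangle$, using local-unitary freedom to bring $|\psi^i\rangle$ to the canonical form $|0\rangle\otimes(\cos\theta|00\rangle+\sin\theta|11\rangle)$, and checking that the largest eigenvalue of $M_i$ is $3$ for every $\theta$. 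Your ``careful bookkeeping of Schmidt/frame contributions'' and the SDP-dual idea are not yet arguments; the eigenvalue route is the concrete mechanism that closes the gap.

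Your treatment of (\ref{bs_mix_a}) has an actual error rather than just an omission. The claim that (\ref{bs_mix_a}) ``collapses to'' $V^{YZ}\le V^{XY}+V^{ZX}$ is wrong: since the diagonal contribution is $\sum_k q_k(4b_k-3)$, which is generically \emph{strictly less} than $1$, the variance inequality you isolate is merely sufficient, and in fact it is false. Take the equal mixture of $|0\rangle_X\otimes|\Phi^+\rangle_{YZ}$ and $|0\rangle_X\otimes|\Phi^-\rangle_{YZ}$: here $\vec s_k=\vec u_k=0$, so $V^{XY}=V^{ZX}=0$, while the $YZ$ correlation matrices $\mathrm{diag}(1,-1,1)$ and $\mathrm{diag}(-1,1,1)$ give $V^{YZ}=2>0$. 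Yet (\ref{bs_mix_a}) is of course satisfied ($0+0-1\le 1$), because the slack sits in the diagonal term. The paper does not attempt this variance route for (\ref{bs_mix_a}) at all; it invokes the PPT-based argument of Appendix~\ref{ap:Fully_separable_mixed}, which only uses separability across $X|YZ$ and hence already covers the biseparable case.
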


The proof is given in Appendix~\ref{ap:bisep_criteria} and the sets defined by these inequalities are displayed in Fig.~\ref{fig:bi-separable_mixed}.

For pure biseparable states we recover the previous results, thus states of the form of Eq.~(\ref{eq:gen_bisep}) saturate the inequalities~(\ref{bs_mix_c})-(\ref{bs_mix_d}).
However, it is not clear if the inequalities are also tight for mixed states. Note that if Conjecture~\ref{conj:mixed_total _nonlin} holds, parts of the biseparable sets would lie outside of the region attainable by quantum states. 

A state violating inequality~(\ref{bs_mix_c}) cannot be written in the form of Eq.~(\ref{bi_sep_form}) and therefore is entangled across the bipartition $A|BC$.
Similarly, we can detect entanglement for the partitions $B|CA$ and $C|AB$.
Combining this, any state mapping outside of the union of the three sets defined by Eqs.~(\ref{bs_mix_c})-(\ref{bs_mix_d}) can not be separable with respect to a given partition.
However, this does not imply that the state is genuinely multipartite entangled.
There is the possibility that our inequalities are violated by a mixture of biseparable states.
Such states have the form
\begin{equation}
    \label{eq:gen_bisep}
    \varrho_{\text{bisep}} =
    p_{A} \vr_{A|BC}
    + p_{B} \vr_{B|CA}
    + p_{C} \vr_{C|AB},
\end{equation}
where $p_A, p_B, p_C$ are probability distributions.
In Appendix~\ref{ap:Mixed_bi_sep} we explicitly provide a biseparable state that does not lie in the union of the three different sets corresponding to partition biseparability.

\begin{figure}
  \centering
  \includegraphics[width=0.8\columnwidth]{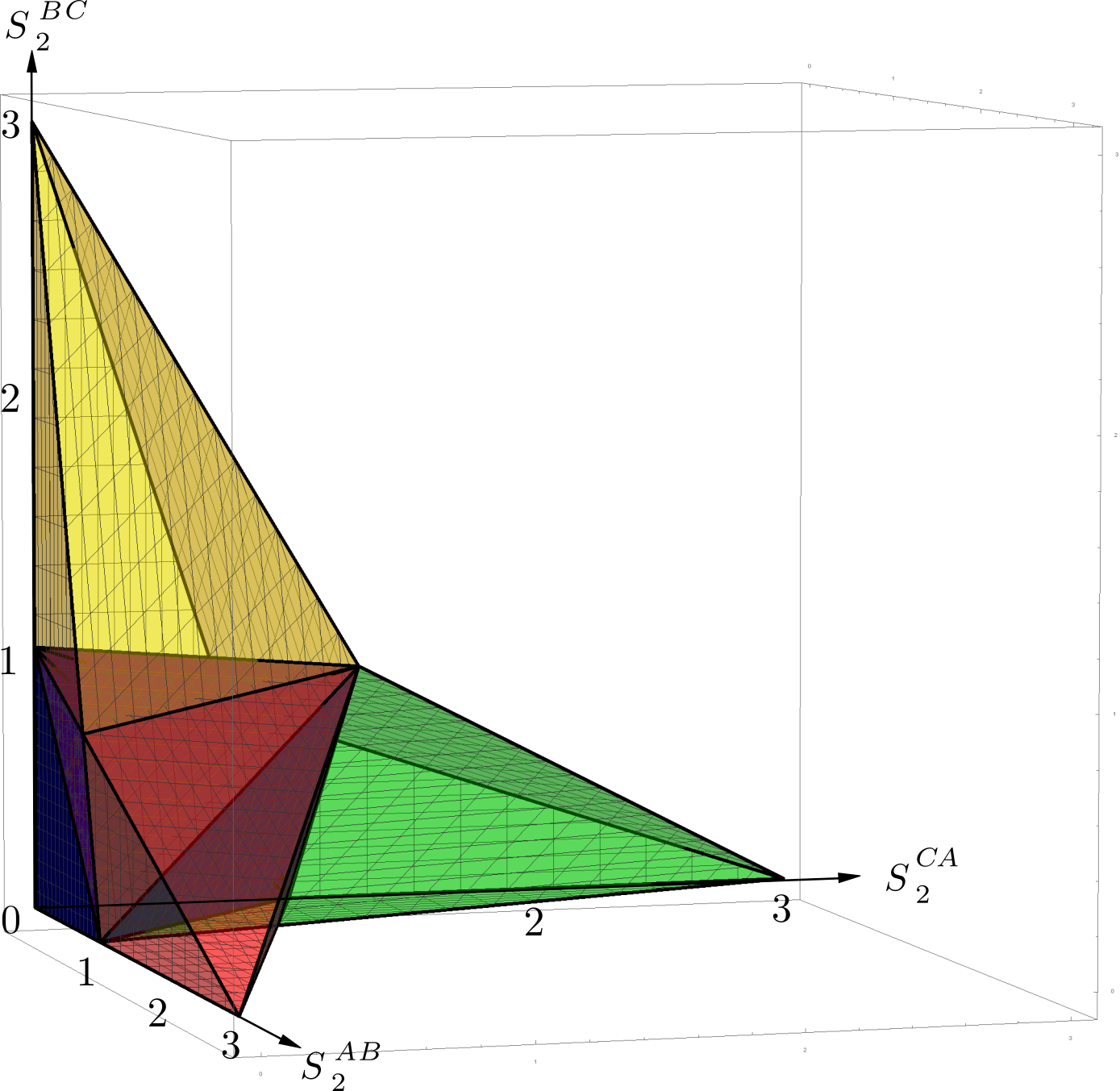}
  \caption{
  The state space of states that are biseparable along some fixed bi-partition, according to Observtion~\ref{ob:bisep_criteria} presented in this paper. The Green, Orange, and Red regions correspond to states biseparable in the $AC|B$, $AB|C$, and $BC|A$ bipartitions.}

   \label{fig:bi-separable_mixed}
\end{figure}

\section{Rank bounds}\label{sec:rank}
In this section, we look at restrictions of the two-body correlations $(S_2^{AB}, S_2^{BC}, S_2^{CA})$ depending on the matrix rank of a quantum state. Since a slight perturbation of a state can arbitrarily increase the rank, it is only possible to find the necessary conditions on the correlations for a state to have a certain rank:
\begin{observation} \label{ob:rank}
    Any rank-two three-qubit state obeys $S_2 \geq 1$ and any rank-three three-qubit state obeys $S_2 \geq 1/3$.
    There is no non-trivial necessary condition for rank-$k$ for $k\geq 4$ three-qubit states in terms of the coordinates $(S_2^{AB}, S_2^{BC}, S_2^{CA})$.
\end{observation}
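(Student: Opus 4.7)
The plan is to combine the purity bound with the specific structure of low-rank three-qubit states. For any rank-$r$ three-qubit state, $\tr(\vr^2)\ge 1/r$ together with $8\tr(\vr^2)=1+S_1+S_2+S_3$ gives $S_1+S_2+S_3\ge 8/r-1$, which evaluates to $3$ for rank two and $5/3$ for rank three. This alone does not directly deliver the claimed bounds on $S_2$; additional control on $S_1+S_3$ is required. I would purify $\vr_{ABC}$ to a four-partite pure state $\ket{\Psi}_{ABCE}$ with an ancilla $E$ of dimension $r$ and exploit the sector-length duality $\tr(\vr_S^2)=\tr(\vr_{S^c}^2)$ for complementary subsystems. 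Combined with the identity $S_2=4\sum_{XY}\tr(\vr_{XY}^2)-4\sum_X\tr(\vr_X^2)+3$, this recasts the target as the marginal-purity inequality $\sum_X\tr(\vr_X^2)-\sum_{XY}\tr(\vr_{XY}^2)\le 1/2$ (for rank two), which I expect to close by invoking non-negativity of the four-body sector length $S_4^{(4)}\ge 0$ of the purification.

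For tightness I would exhibit explicit saturating states. The rank-two state $\vr_2=\tfrac{1}{2}(\ketbra{000}{000}+\ketbra{011}{011})=\ketbra{0}{0}_A\otimes\tfrac{1}{2}(\ketbra{00}{00}+\ketbra{11}{11})_{BC}$ satisfies $S_2^{AB}=S_2^{CA}=0$ and $S_2^{BC}=1$, so $S_2=1$. The rank-three state $\vr_3=\tfrac{1}{3}(\ketbra{000}{000}+\ketbra{011}{011}+\ketbra{101}{101})$ is diagonal in the computational basis, and direct evaluation of its two-qubit marginals gives $\braket{\sigma_z\otimes\sigma_z}_{XY}=\pm 1/3$ for each pair while all other two-body correlators vanish, so $S_2^{AB}=S_2^{BC}=S_2^{CA}=1/9$ and $S_2=1/3$. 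For rank $k\ge 4$ I would display rank-$k$ states with $S_2=0$: the rank-four mixture $\vr_4=\tfrac{1}{4}\sum_{c\in\{000,011,101,110\}}\ketbra{c}{c}$ has all two-body marginals equal to $\mathds{1}/4$ and hence $S_2=0$; the maximally mixed state $\mathds{1}/8$ covers rank eight; for $k=5,6,7$ I would construct analogous diagonal mixtures over larger subsets of the computational basis, choosing the weights so that the three parity sums $\sum_c p_c(-1)^{c_X+c_Y}$ vanish for $(X,Y)\in\{(A,B),(B,C),(C,A)\}$, which leaves a one-parameter family of rank-$k$ solutions with $S_2=0$.

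The main obstacle is closing the lower bound itself: while the purity inequality is immediate, upgrading it to $S_2\ge 1$ (rank two) and $S_2\ge 1/3$ (rank three) requires controlling $S_1+S_3$ under the rank constraint. The four-partite purification together with sector-length duality and non-negativity of higher sector lengths of $\ket{\Psi}$ should carry the argument for rank two, whereas rank three is more delicate because the natural ancilla is a qutrit rather than a qubit, breaking the clean four-qubit purification structure and calling either for a more careful accounting with a qutrit ancilla or for an entirely different parametrization of the minimizing states.
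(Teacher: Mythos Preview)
Your first step (the purity bound $S_1+S_2+S_3\ge 8/r-1$) matches the paper exactly, and your saturating examples as well as the rank-$k\ge 4$ constructions are correct. The gap is in how you propose to control $S_1+S_3$. The paper does not use a rank-dependent purification argument at all; it simply invokes the \emph{universal} positivity constraint
\[
1 - S_1 + S_2 - S_3^{ABC} \ge 0,
\]
which holds for \emph{every} three-qubit state (this is nothing but $\tr(\vr\,\tilde{\vr})\ge 0$ for the spin-flipped state $\tilde{\vr}=\sigma_y^{\otimes 3}\vr^{T}\sigma_y^{\otimes 3}$, as recorded in Ref.~\cite{wyderka2020characterizing}). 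Adding this inequality to the purity bound gives $2S_2\ge 8/r-2$, i.e.\ $S_2\ge 4/r-1$, which delivers both the rank-two and rank-three bounds in one stroke.

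Your purification route could perhaps be pushed through for rank two---where the ancilla is a qubit and four-qubit sector-length identities are available---but, as you yourself note, it does not extend to rank three because the qutrit ancilla destroys the multi-qubit structure you rely on. The point is that no rank-specific argument is needed: the missing inequality $S_1+S_3\le 1+S_2$ is a state-independent consequence of positivity, so the same two-line computation covers all ranks uniformly. Once you replace the purification plan by this single inequality, the proof is complete.
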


The proof is given in Appendix~\ref{ap:ob_rank}.
For both rank two and three, there exist states that saturate the inequality. For rank two, one family of states that saturate the bound are the ones in Eq.~\eqref{eq:fs_s} with $p = \frac{1}{2}$.  However, we do not expect these bounds to be optimal, i.e. the tightest bounds in terms of $(S_2^{AB}, S_2^{BC}, S_2^{CA})$, and believe that there exist better non-linear bounds.

Concerning the nonexistence of non-trivial necessary conditions for states of rank four and higher, we note that the trivial one is $S_2 \geq 0$. That is, rank-four states have just a high enough rank to cover the whole set of two-body correlations.
For instance, the separable state
\begin{equation}
    \eta = \frac{1}{4} (
    \ket{000}\!\bra{000} +
    \ket{100}\!\bra{100} +
    \ket{101}\!\bra{101} +
    \ket{110}\!\bra{110})
\end{equation}
has rank $4$ but $S_2 = 0$.
That is, this state is indistinguishable from the maximally mixed state in our coordinates.
The same can even hold for entangled states.
The three-qubit state $\vr_{\theta}$ discussed in Eq.~(F1) in Ref.~\cite{ohst2022certifying}, has no two-body correlation, that is $S_2 = 0$.
This state is separable for all bipartitions, but it is not fully separable. Thus, it contains a very weak form of entanglement in three qubits, called bound entanglement~\cite{horodecki1998mixed}.

\section{Conclusion}
We have developed methods to characterize three-qubit quantum states
from two-body marginal correlations that are invariant under local 
unitaries. For pure states, we have found tight nonlinear bounds and 
given their analytical proofs. For mixed states, we have proposed 
criteria for different forms of multipartite entanglement, which 
are expressed as nontrivial linear combinations.

Our results can be generalized in several directions. First, it would 
be interesting to extend this approach to multi-qubit or high-dimensional 
states. Finding a systematic approach to identify state areas is important
for characterizing the possible marginal configurations.
Second, different types of marginal local unitary invariants may 
be employed as other coordinates. This may lead to new criteria to detect 
multipartite entanglement from marginal information.
Finally, since our entanglement criteria can detect entanglement in which 
the two-body correlations are not equally distributed, our results may 
encourage the characterization of multipartite entanglement without 
symmetric properties.

\section{Acknowledgments}
We would like to thank
Akimasa Miyake,
Ties Ohst,
Jens Siewert,
and
Nikolai Wyderka
for discussions.
We would like to particularly thank Konrad Szyma\'nski for his inputs leading to the analytical proof for Eq.~\eqref{g_bound} in Appendix \ref{ap:bisep_criteria}.
Shravan also thanks Anil Shaji for all the support given during the initial part of the project. 

This work was supported by
the DAAD,
the Deutsche Forschungsgemeinschaft (DFG, German Research Foundation, project numbers 447948357 and 440958198),
the Sino-German Center for Research Promotion (Project M-0294),
the ERC (Consolidator Grant 683107/TempoQ), and 
the German Ministry of Education and Research (Project QuKuK, BMBF Grant No. 16KIS1618K),
the Basque Government through IKUR strategy and through the BERC 2022-2025 program and by the Ministry of Science and Innovation: BCAM Severo Ochoa accreditation CEX2021-001142-S / MICIN / AEI / 10.13039/501100011033, PID2020-112948GB-I00 funded by MCIN/AEI/10.13039/501100011033 and by "ERDF A way of making Europe". 

\onecolumngrid
\newpage
\appendix
\addtocounter{theorem}{-6}
\section{Proof of Observations~\ref{ob:proof_pure} and \ref{ob:proof_threedim}}\label{ap:proof_pure}
\begin{observation}
For any three-qubit pure state the linear condition $S^{AB}_2 + S^{BC}_2 + S^{CA}_2 = 3$ holds. It further holds that 
\begin{align}
    \sqrt{S_2^{AB}} + \sqrt{S_2^{BC}} - \sqrt{S_2^{CA}}
    &\leq \sqrt{3},
\end{align}
and as well for the other two permutations of the party labels.
\end{observation}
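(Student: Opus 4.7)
The plan is to reduce the non-linear bound to a polynomial inequality in the single-qubit purities $a=S_1^A$, $b=S_1^B$, $c=S_1^C$, and then verify it via a canonical form.

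\emph{Step 1 (pointwise identity and the linear bound).} Schmidt decomposing $\ket\psi$ across each bipartition $Z|XY$ gives $\tr(\vr_Z^2)=\tr(\vr_{XY}^2)$. Combining the two-qubit Bloch expansion $4\tr(\vr_{XY}^2)=1+S_1^X+S_1^Y+S_2^{XY}$ with $2\tr(\vr_Z^2)=1+S_1^Z$ yields the refined pointwise relation
\begin{equation}
    S_2^{XY}=1+2S_1^Z-S_1^X-S_1^Y.
\end{equation}
Summing over the three cyclic labellings of $(X,Y,Z)$ makes the single-qubit terms cancel and gives $S_2^{AB}+S_2^{BC}+S_2^{CA}=3$.

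\emph{Step 2 (reformulation).} Setting $u=\sqrt{S_2^{AB}}$, $v=\sqrt{S_2^{BC}}$, $w=\sqrt{S_2^{CA}}$ with $u^2+v^2+w^2=3$, squaring $u+v-w\le\sqrt 3$ and using the sphere identity turns the claim into $uv\le w(u+v)$, equivalently the reciprocal triangle inequality $1/u+1/v\ge 1/w$ when all three are positive. Via the identity of Step 1 this is a polynomial inequality in $(a,b,c)$.

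\emph{Step 3 (verification).} I would use the Ac\'in canonical form $\ket\psi=\lambda_0\ket{000}+\lambda_1 e^{i\phi}\ket{100}+\lambda_2\ket{101}+\lambda_3\ket{110}+\lambda_4\ket{111}$, with $\lambda_i\ge 0$ and $\sum_i\lambda_i^2=1$, express $(a,b,c)$ as polynomials in the five real parameters, and maximise the left-hand side by Lagrange multipliers. I expect the critical manifold to coincide, up to party relabelling, with the W-class family $\ket{\xi_1(p)}$ of the main text. For that family a short calculation yields $S_1^A+S_1^B+S_1^C=1$, whence $S_2^{AB}=3c$, $S_2^{BC}=3a$, $S_2^{CA}=3b$; substituting $\sqrt a=2p f_p^+$, $\sqrt b=2p f_p^-$, $\sqrt c=1-2p^2$ gives $\sqrt a+\sqrt b-\sqrt c=1$, and hence $\sqrt{S_2^{AB}}+\sqrt{S_2^{BC}}-\sqrt{S_2^{CA}}=\sqrt 3$, confirming tightness.

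The main obstacle is Step 3: the five-parameter constrained optimisation produces a sizeable polynomial system and, a priori, extrema may also sit on boundaries where some $\lambda_i$ vanish, so a careful case analysis is needed. A cleaner alternative I would try first exploits the rank-two structure of the two-qubit marginals of a pure three-qubit state. Schmidt decomposing on the $C|AB$ cut as $\ket\psi=\sqrt\lambda\ket0_C\ket{\phi_0}_{AB}+\sqrt{1-\lambda}\ket1_C\ket{\phi_1}_{AB}$ and writing $\ket{\phi_0},\ket{\phi_1}$ in their own Schmidt forms reduces the problem to a bound in at most three Schmidt-like parameters plus one relative phase, in which coordinates the extremal W-family structure and the reduction to an elementary estimate on the slice $S_1^A+S_1^B+S_1^C=1$ should become transparent.
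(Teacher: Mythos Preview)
Your Steps~1 and~2 are correct, and the reciprocal-triangle reformulation $1/u+1/v\ge 1/w$ is neat. The gap is entirely in Step~3: you propose a five-parameter Lagrange optimisation over the Ac\'in manifold but do not carry it out, and the boundary case analysis you anticipate would indeed be lengthy. The paper bypasses this completely. Combining your Step~1 identity with the pure-state relation $S_1^A+S_1^B+S_1^C+S_3^{ABC}=4$ gives the sharper affine form $3S_1^X=S_2^{YZ}+\Delta$ with $\Delta=3-S_3^{ABC}$. The crucial missing ingredient is then the known characterisation of the admissible one-body Bloch lengths for pure three-qubit states, namely the polygon inequalities
\[
\sqrt{S_1^X}+\sqrt{S_1^Y}-\sqrt{S_1^Z}\le 1,
\]
established in Ref.~\cite{morelli2023correlation} (equivalently, the Higuchi--Sudbery--Szulc marginal constraints). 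Multiplying by $\sqrt3$ and substituting $3S_1^X=S_2^{YZ}+\Delta$ gives Observation~\ref{ob:proof_threedim} in one line, and Observation~\ref{ob:proof_pure} follows because that constraint is least restrictive at $\Delta=0$. So your entire Step~3 is unnecessary once you notice that, after Step~1, the problem depends only on $(S_1^A,S_1^B,S_1^C)$, whose feasible region is already known in closed form.

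A minor slip in your tightness check: for $\ket{\xi_1(p)}$ one has $f_p^+-f_p^-=p$, so $\sqrt a-\sqrt b=2p(f_p^+-f_p^-)=2p^2=1-\sqrt c$, giving $\sqrt a+\sqrt c-\sqrt b=1$ rather than $\sqrt a+\sqrt b-\sqrt c=1$. With your identifications $S_2^{AB}=3c$, $S_2^{BC}=3a$, $S_2^{CA}=3b$ this is exactly what is needed for the claimed saturation $\sqrt{S_2^{AB}}+\sqrt{S_2^{BC}}-\sqrt{S_2^{CA}}=\sqrt3$, so your conclusion survives the typo.
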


\begin{observation}
For any three-qubit pure state the linear condition $S^{AB}_2 + S^{BC}_2 + S^{CA}_2 = 3$ holds. It further holds that
\begin{align}\label{app:eq:pure_3d1}
    S_3^{ABC}&\le 3+ \min(S_2^{AB},S_2^{BC},S_2^{CA}),
\end{align}
and
\begin{align}
    \sqrt{S_2^{AB} + \Delta} + \sqrt{S_2^{BC}+ \Delta}
    - \sqrt{S_2^{CA}+ \Delta}
    &\le \sqrt{3},
\end{align}
where $\Delta \equiv 3 - S_3^{ABC} \in [-1, 2]$ and as well for the other two permutations of the party labels.
\end{observation}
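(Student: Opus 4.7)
My plan is to reduce every inequality in the observation to a constraint on the three single-qubit Bloch vector lengths $|\vec{r}_X|$ (with $S_1^X = |\vec{r}_X|^2$), whereupon the content becomes equivalent to the well-known polygon constraints on qubit marginals of pure states. First, I exploit the fact that for a pure three-qubit state the Schmidt decomposition on each bipartition $X|YZ$ forces $\tr(\vr_X^2) = \tr(\vr_{YZ}^2)$; expanding the purities as $\tr(\vr_X^2) = (1+S_1^X)/2$ and $\tr(\vr_{YZ}^2) = (1+S_1^Y+S_1^Z+S_2^{YZ})/4$ gives $S_2^{YZ} = 1 + 2 S_1^X - S_1^Y - S_1^Z$. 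Combining these three relations with the global purity $\tr(\vr_{ABC}^2)=1$, which together with $S_2=3$ enforces $S_1^A+S_1^B+S_1^C = 4 - S_3^{ABC}$, yields the central identity
\begin{equation}
    S_2^{YZ} + \Delta = 3\, S_1^X , \qquad \Delta = 3 - S_3^{ABC},
\end{equation}
for every permutation $(X,Y,Z)$ of $(A,B,C)$.

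Using this identity both parts of the observation reduce to transparent one-body statements. The bound $S_3^{ABC} \leq 3 + \min(S_2^{AB}, S_2^{BC}, S_2^{CA})$ is exactly $S_1^X \geq 0$ for each qubit, and the bracketed range $\Delta \in [-1, 2]$ encodes $0 \leq S_1 \leq 3$, equivalent to $|\vec{r}_X| \in [0,1]$. The triangle-type inequalities in Eq.~\eqref{eq:pure_3d}, after the substitution $\sqrt{S_2^{YZ}+\Delta} = \sqrt{3}\,|\vec{r}_X|$, become
\begin{equation}
    |\vec{r}_A| + |\vec{r}_C| - |\vec{r}_B| \leq 1
\end{equation}
together with its two cyclic permutations. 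These are the polygon inequalities of Higuchi-Sudbery-Szulc for qubit marginals of pure multipartite states, equivalently $\lambda_X^- \leq \lambda_Y^- + \lambda_Z^-$ on the smaller eigenvalues $\lambda_X^- = (1-|\vec{r}_X|)/2$ of the reduced qubit states.

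Tightness would be confirmed by inserting the family $|\Xi(x,y)\rangle$ from Eq.~\eqref{eq:pure_3d2} into the definitions of $S_2^{XY}$ and $S_3^{ABC}$ and checking that as $(x,y)$ varies over its domain the resulting coordinates sweep the full boundary surface cut out by the two families of inequalities. The main obstacle is providing the polygon inequalities themselves: either by citing the Higuchi-Sudbery-Szulc theorem, or by a short proof in the three-qubit case that exploits the rank-two structure of $\vr_{YZ}$ on a pure state and the preservation of spectra under complementary partial traces; once this geometric input is in place, the remaining content is the sector-length bookkeeping outlined above.
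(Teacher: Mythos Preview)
Your proposal is correct and follows essentially the same route as the paper's proof: both derive the identity $S_2^{YZ}+\Delta = 3\,S_1^X$ (the paper writes it equivalently as $S_1^X = 1 + \tfrac{1}{3}S_2^{YZ} - \tfrac{1}{3}S_3^{ABC}$) from the equal-purity relations for complementary marginals together with $S_1+S_3^{ABC}=4$, read off the first bound from $S_1^X\ge 0$, and then reduce the square-root inequality to the single-qubit constraint $\sqrt{S_1^X}+\sqrt{S_1^Y}-\sqrt{S_1^Z}\le 1$. The only cosmetic difference is the citation for that last input---the paper invokes Ref.~\cite{morelli2023correlation}, while you name it as the Higuchi--Sudbery--Szulc polygon inequality; these are the same statement for three qubits.
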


\begin{proof}
First, we note that Observation~\ref{ob:proof_pure} follows from Observation~\ref{ob:proof_threedim}, since this inequality is the least restrictive for $\Delta = 0$.
According to Ref.~\cite{wyderka2020characterizing}, any three-qubit pure state obeys
\begin{align}
    &S_2^{AB} + S_2^{BC} + S_2^{CA} = 3,\\
    &S_1^{A} + S_1^{B} + S_1^{C} + S_3^{ABC} = 4.
\end{align}
The purity of any two marginals is equal, which we reformulate to
\begin{align}
    S_1^{X} + S_1^{Y} + S_2^{XY} &= 1 + 2S_1^{Z},
\end{align}
where $X,Y,Z=A,B,C$.
From this, it follows
\begin{align}\label{eq:ap_A_1}
    S_1^{X} &= 1 + \tfrac{1}{3}S_2^{YZ} - \tfrac{1}{3}S_3^{ABC}.
\end{align}
Eq.~(\ref{app:eq:pure_3d1}) follows immediately by noting that $S_1^{X}\ge0$.
Ref.~\cite{morelli2023correlation} recently showed that for any three-qubit pure states it holds that
\begin{align}\label{eq:result_Jens_Simon}
    \sqrt{S_1^{X}} + \sqrt{S_1^{Y}} - \sqrt{S_1^{Z}} \le 1,
\end{align}
where $X,Y,Z=A,B,C$ and $S_1^X = \sum_{i=1}^3 \ex{\sigma_i}_{\vr_X}^2$ for $X=A,B,C$ and the single-qubit reduced state $\vr_X$.
Substituting Eq.~(\ref{eq:ap_A_1}) into all three permutations of Eq.~(\ref{eq:result_Jens_Simon}) and multiplying with $\sqrt{3}$, we immediately arrive at the inequality in Observation~\ref{ob:proof_threedim}
\begin{align}
    \sqrt{3+S_2^{AB}-S_3^{ABC}} + \sqrt{3+S_2^{BC}-S_3^{ABC}} - \sqrt{3+S_2^{CA}-S_3^{ABC}} &\le \sqrt{3}
\end{align}
and permutations thereof.
Noting that this holds for all values $1\le S_3^{ABC}\le4$ and that the condition is the least restrictive for $S_3^{ABC}=3$, we complete the proof of Observation~\ref{ob:proof_pure}.
\end{proof}

\section{Evidence for Conjecture~\ref{conj:mixed_total _nonlin}}\label{ap:mixed_total _nonlin}
Here we describe the details of getting evidence for Conjecture~\ref{conj:mixed_total _nonlin}:
\begin{conjecture}
For any three-qubit state it holds that
\begin{subequations}
       \begin{align}
        \begin{split}
            \label{Full_ss_ba}
            \sqrt{S_2^{AB}} + \sqrt{S_2^{AC}} - \sqrt{S_2^{BC}}
            &\leq \sqrt{3}, 
        \end{split} \\
        \begin{split}
            \label{Full_ss_ca}
            \sqrt{S_2^{AB}} - \sqrt{S_2^{AC}} + \sqrt{S_2^{BC}}
            &\leq \sqrt{3}, 
        \end{split} \\
        \begin{split}
            \label{Full_ss_da}
            -\sqrt{S_2^{AB}} + \sqrt{S_2^{AC}} + \sqrt{S_2^{BC}}
            &\leq \sqrt{3}.
        \end{split}
       \end{align}
    \end{subequations}
\end{conjecture}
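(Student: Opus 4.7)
The plan is to lift the pure-state bound of Observation~\ref{ob:proof_pure} to arbitrary mixed states by combining the convexity of the Bloch-tensor norms with a rank-reduction argument that isolates the obstruction coming from the minus sign. First, I would note that for each ordered pair $(X,Y)$ the map $\vr\mapsto\sqrt{S_2^{XY}(\vr)}$ is the Euclidean norm of the correlation tensor $(\ex{\sigma_i\otimes\sigma_j}_{\vr_{XY}})_{ij}\in\mathbb{R}^9$, which depends linearly on $\vr$. Hence $\sqrt{S_2^{XY}}$ is convex in $\vr$, and for any pure-state decomposition $\vr=\sum_k p_k\ketbra{\psi_k}{\psi_k}$ the triangle inequality gives $\sqrt{S_2^{XY}(\vr)}\le\sum_k p_k\sqrt{S_2^{XY}(\psi_k)}$. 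Applied to the two positive terms in Eq.~(\ref{Full_ss_ba}) and combined with Observation~\ref{ob:proof_pure} on each $\ket{\psi_k}$, this yields
\begin{equation*}
\sqrt{S_2^{AB}(\vr)}+\sqrt{S_2^{AC}(\vr)}\le\sqrt{3}+\sum_k p_k\sqrt{S_2^{BC}(\psi_k)}.
\end{equation*}

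The conjecture would then follow if one could find a pure-state decomposition for which $\sum_k p_k\sqrt{S_2^{BC}(\psi_k)}\le\sqrt{S_2^{BC}(\vr)}$ holds. This is exactly the reverse of the triangle inequality, and it is the central obstacle of the argument: because $-\sqrt{S_2^{BC}}$ is concave in $\vr$, convexity alone produces a bound in the wrong direction. The second step of the plan is therefore to search for a decomposition whose $BC$-correlation vectors $\vec{u}_{BC}^{(k)}=(\ex{\sigma_i\otimes\sigma_j}_{(\psi_k)_{BC}})_{ij}$ are collinear with the averaged tensor $\vec{u}_{BC}(\vr)=\sum_k p_k\vec{u}_{BC}^{(k)}$, since then the triangle inequality is saturated and the previous bound closes. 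Existence of such an aligned ensemble can be phrased as an eigenvalue condition on the partial-trace map and is closely related to the self-dual decompositions of $\vr$. Should alignment fail in full generality, the backup plan is an extremality argument: the image of the state space in $\mathbb{R}^{27}$ under $\vr\mapsto(\vec{u}_{AB},\vec{u}_{BC},\vec{u}_{CA})$ is convex, so extremal points saturating the conjectured bound come from states of low rank. The rank-two case is already handled in Appendix~\ref{ap:mixed_total _nonlin}, and I would treat the remaining rank-three case by purifying $\vr$ to a four-qubit pure state $\ket{\Psi}$ and transporting Observation~\ref{ob:proof_pure} across the extra qubit by applying it to the three distinct tripartitions of $\ket{\Psi}$ and summing the resulting bounds.

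The hardest step will be closing the reverse-triangle gap, or equivalently identifying an explicit characterization of the boundary-attaining mixed states in the coordinates $(S_2^{AB},S_2^{BC},S_2^{CA})$. I expect that the full argument will require either an analytical parametrization of these extremal states in the spirit of the pure-state family $\ket{\xi_i(p)}$, or a genuinely new algebraic inequality that ties the three two-body correlation tensors to the global purity $\tr(\vr^2)$ and rules out the configurations in which naive convexity fails.
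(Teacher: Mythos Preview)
The statement is labelled a \emph{Conjecture} in the paper, and no proof is given. Appendix~\ref{ap:mixed_total _nonlin} offers only numerical evidence for the rank-two case: a general rank-two state is parametrized via Ac\'{\i}n's canonical form for one pure component and a generic eight-term state for the other, yielding a $21$-parameter expression for the left-hand sides of Eqs.~(\ref{Full_ss_ba})--(\ref{Full_ss_da}), whose maximum is then checked numerically not to exceed $\sqrt{3}$. Ranks three and higher are left open. So there is no proof in the paper against which to benchmark your attempt; you are trying to establish something the authors could not.

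Your outline correctly isolates the obstruction---the term $-\sqrt{S_2^{BC}}$ is concave, so the full expression is neither convex nor concave in $\vr$ and Observation~\ref{ob:proof_pure} does not lift by convexity---but none of the proposed remedies closes. The aligned-decomposition idea demands equality in the triangle inequality, i.e.\ that all $\vec{u}_{BC}^{(k)}$ be nonnegative multiples of a common vector; nothing in your sketch produces such a decomposition, and for a generic $\vr$ none will exist. The extremality fallback is also unjustified: since $f(\vec u)=\|\vec u_{AB}\|+\|\vec u_{CA}\|-\|\vec u_{BC}\|$ is neither convex nor concave on the convex image in $\mathbb{R}^{27}$, its maximum over that set need not sit at an extreme point, so you cannot reduce to low rank on those grounds (and the rank-two case you cite as ``already handled'' is itself only a numerical check in the paper, not an analytic result you can build on). Finally, the four-qubit purification route cannot invoke Observation~\ref{ob:proof_pure} directly, because that result is specific to three single-qubit parties; every tripartition of the purified state has at least one two-qubit party, and the sector-length identities underlying the proof in Appendix~\ref{ap:proof_pure} are dimension-dependent.
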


We discuss the case of a rank-two state and the higher ranks as a conjecture.
Let us recall the so-called Acin's decomposition \cite{acin2000generalized}:
any pure three-qubit state can be written as follows, up to local unitaries:
\begin{equation}
    \ket{\psi(\vec \lambda)} = \lambda_0 \ket{000} + \lambda_1 e^{i \phi} \ket{101} + \lambda_2 \ket{110} + \lambda_3 \ket{110} + \lambda_4 \ket{111}, 
\end{equation}
where $ \vec \lambda = (\lambda_0,\dots,\lambda_4)$ and $\sum_{i = 0}^4 \lambda^2_i = 1$.
Thus, up to local unitaries, any three-qubit rank-two state can be written as
\begin{equation}
    \label{general_rank_2}
    \varrho_{2} = p \ketbra{\psi(\vec \lambda)}{\psi(\vec \lambda)} + (1-p) \ketbra{\Psi (\vec a, \vec \phi)}{\Psi(\vec a, \vec \phi)}, 
\end{equation}
where  
\begin{equation*}
    \ket{\Psi(\vec a, \vec \phi )} = a_0 \ket{000} + a_1 e^{i \phi_1} \ket{001} + a_2 e^{i \phi_2} \ket{010} + a_3 e^{i \phi_3} \ket{011} + a_4 e^{i \phi_4} \ket{100} + a_5 e^{i \phi_5} \ket{101} + a_6 e^{i \phi_6} \ket{110}  + a_7 e^{i \phi_7} \ket{111}.
\end{equation*}
Then one can convert the left-hand sides of Eqs.~(\ref{Full_ss_ba}, \ref{Full_ss_ca}, \ref{Full_ss_da})
into the functions with $21$ parameters.
Then one can check that the maximization of the functions can be always bounded by $\sqrt{3}$.

\section{Proof of Observation~\ref{ob:Fully_separable_mixed}}\label{ap:Fully_separable_mixed}
\begin{observation}
Any fully separable three-qubit state obeys
\begin{subequations}
\label{fs_app}
 \begin{align}
     \begin{split}
         \label{fs_app_a}
         S^{AB}_2 + S^{BC}_2 - S^{CA}_2 &\leq 1,
     \end{split} \\
    \begin{split}
    \label{fs_app_b}
           S^{AB}_2 - S^{BC}_2 + S^{CA}_2 &\leq 1,
     \end{split} \\
     \begin{split}
     \label{fs_app_c}
           -  S^{AB}_2 + S^{BC}_2 + S^{CA}_2 &\leq 1.
     \end{split}
 \end{align}
\end{subequations}
These are the optimal criteria in the sense that if there are three $S_2^{XY}$ that obey the inequalities, then there is a fully separable state compatible with them.
\end{observation}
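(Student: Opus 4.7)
The plan is to prove the forward direction by reducing the inequalities to a short pointwise bound via a bilinear expansion of $S_2^{XY}$ on a fully separable decomposition, and then to establish tightness by exhibiting explicit families that cover the resulting polytope.

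For the forward direction, take any fully separable $\vr = \sum_k p_k \vr_k^A \otimes \vr_k^B \otimes \vr_k^C$, and let $\vec a_k, \vec b_k, \vec c_k$ denote the Bloch vectors of the local factors; each has length at most one. The identity $\braket{\sigma_i \otimes \sigma_j}_{\vr_{AB}} = \sum_k p_k (a_k)_i (b_k)_j$ gives
\begin{equation*}
    S_2^{AB} = \sum_{k,l} p_k p_l\, (\vec a_k \cdot \vec a_l)(\vec b_k \cdot \vec b_l),
\end{equation*}
and analogously for $S_2^{BC}$ and $S_2^{CA}$. Abbreviating the dot products by $\alpha_{kl}, \beta_{kl}, \gamma_{kl}$, Cauchy--Schwarz forces $\alpha_{kl}, \beta_{kl}, \gamma_{kl} \in [-1, 1]$, and
\begin{equation*}
    S_2^{AB} + S_2^{BC} - S_2^{CA} = \sum_{k,l} p_k p_l \bigl(\alpha_{kl} \beta_{kl} + \beta_{kl} \gamma_{kl} - \gamma_{kl} \alpha_{kl}\bigr).
\end{equation*}
The bound then reduces to the pointwise statement $\alpha \beta + \beta \gamma - \gamma \alpha \leq 1$ on $[-1,1]^3$. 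This is immediate by linearity in $\beta$: at $\beta = \pm 1$ the expression equals $1 - (1 \mp \alpha)(1 \mp \gamma) \leq 1$. Combined with $\sum_{k,l} p_k p_l = 1$ this yields Eq.~\eqref{fs_app_a}, and the other two inequalities follow by relabeling the parties.

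For tightness, I would observe that the region cut out by Eqs.~\eqref{fs_app_a}--\eqref{fs_app_c} together with $S_2^{XY} \geq 0$ is a triangular bipyramid with vertices $(0,0,0), (1,0,0), (0,1,0), (0,0,1), (1,1,1)$, and in particular is star-shaped with respect to the origin. Thus every point equals $t \cdot v$ for some $t \in [0,1]$ and some $v$ on one of the three inequality-saturating ``upper'' facets. If $\vr$ is fully separable with coordinates $(x, y, z)$, then $\lambda \vr + (1-\lambda)\eins/8$ is also fully separable, and, since correlators scale linearly by $\lambda$, its coordinates are $(\lambda^2 x, \lambda^2 y, \lambda^2 z)$; hence the interior is reached by scaling. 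It then suffices to verify that the three two-parameter families $\varrho_{\pm\pm\pm}(p, \theta)$ from Eq.~\eqref{eq:fs_s} surject onto the three upper facets under $(p, \theta) \mapsto (S_2^{AB}, S_2^{BC}, S_2^{CA})$, which reduces to an explicit computation using the product formula $S_2^{XY}(\vr_X \otimes \vr_Y) = S_1^X S_1^Y$ on those of their marginals that factorize.

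The main obstacle, in my estimation, is not the forward inequality, whose proof is short once the bilinear decomposition is in place, but the surjectivity check in the tightness step. Saturation of a single facet-defining linear equation does not automatically imply that a given two-parameter family covers the whole two-dimensional triangular facet rather than a proper subregion, so this must be established by an explicit parametrization together with a boundary-behavior argument at $p \in \{0, 1/2, 1\}$ and $\theta \in \{0, \pi/2\}$.
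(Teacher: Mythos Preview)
Your forward argument is correct but takes a genuinely different route from the paper. The paper's proof exploits the PPT criterion: for a fully separable $\vr_{\text{fs}}$ it forms the positive operators $\varrho_1 = \sigma_y^{(A)} \vr_{\text{fs}}^{T_A} \sigma_y^{(A)}$ and $\varrho_2 = \sigma_y^{(B)}\sigma_y^{(C)} \vr_{\text{fs}}^{T_{BC}} \sigma_y^{(B)}\sigma_y^{(C)}$, uses $\tr(\vr_{\text{fs}}\varrho_i)\geq 0$ to obtain two linear inequalities mixing all sector lengths $S_1^X, S_2^{XY}, S_3^{ABC}$, and then adds them so that the one- and three-body terms cancel. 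Your bilinear expansion $S_2^{XY}=\sum_{k,l}p_kp_l(\vec x_k\!\cdot\!\vec x_l)(\vec y_k\!\cdot\!\vec y_l)$ together with the pointwise bound $\alpha\beta+\beta\gamma-\gamma\alpha\le 1$ on $[-1,1]^3$ is more elementary and self-contained; it never leaves the two-body sector and needs no positivity of partial transposes. What the paper's approach buys in exchange is that the two intermediate inequalities, before cancellation, already constitute the stronger separability criteria in Eq.~\eqref{fs_stronger} involving $S_1$ and $S_3$, which your method does not produce as a byproduct.

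For tightness, your star-shaped/scaling reduction to the three upper facets is sound, and the surjectivity check you flag as the main obstacle does go through: writing $u=(2p-1)^2$ and $v=2p(1-p)\cos 2\theta$, the image of, e.g., $\varrho_{++-}$ is precisely the triangle $\{(u,\tfrac{1+u}{2}-v,\tfrac{1+u}{2}+v):u\in[0,1],\ |v|\le\tfrac{1-u}{2}\}$, which is the full facet $-S_2^{AB}+S_2^{BC}+S_2^{CA}=1$. (The paper itself merely asserts that the families saturate, without carrying out this verification.) One minor caveat: the subscript labels $\pm\pm\pm$ in Eq.~\eqref{eq:fs_s} do not match the sign pattern of the facet each family actually fills, so when you do the explicit check be guided by the computation rather than the names.
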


\begin{proof}
Let us begin by recalling the fully separable state
\begin{align}
    \vr_{\text{fs}} =
    \sum_k p_k \vr_k^A \otimes \vr_k^B \otimes \vr_k^C,
\end{align}
with $p_k \in [0,1]$ and $\sum_k p_k =1$.
Here we note that any separable state has positive eigenvalues under partial transposition \cite{peres1996separability}:
\begin{equation}
    \vr_{XY} \in \text{SEP}
\Longrightarrow
\vr_{XY}^{T_X} \geq 0,
\end{equation}
where $(\cdot)^{T_X}$ denotes the partial transposition on subsystem $X$.

In the case where $X=A$ and $Y=BC$, we have that
\begin{align}\label{PPT_1}
    \varrho_1 =
    \sigma_y^{(A)} \vr_{\text{fs}}^{T_A} \sigma_y^{(A)} \geq 0,
\end{align}
where
$\sigma_y^{(A)}$ denotes a Pauli-$Y$ matrix, which is unitary, acting on the subsystem $A$.
Similarly, in the case where $X=BC$ and $Y=A$, we have that
\begin{align}
    \varrho_2\label{PPT_2}
    = \sigma_y^{(B)}\sigma_y^{(C)}
    \vr_{\text{fs}}^{T_{BC}}
    \sigma_y^{(B)}\sigma_y^{(C)}
    \geq 0.
\end{align}
Using the inequality $\tr(AB) \geq 0$ for positive matrices $A,B$, we can arrive at
\begin{align}
    \label{trace_PPT_1}
    & \tr(\vr_{\text{fs}} \varrho_1) \geq 0
    \Longrightarrow
    1 - S_1^{A} + S_1^{B} + S_1^{C} - S^{AB}_2 - S^{CA}_2 + S^{BC}_2 - S_3^{ABC} \geq 0, \\ 
    & \tr(\vr_{\text{fs}} \varrho_2) \geq 0
    \Longrightarrow
    \label{trace_PPT_2}
    1 + S_1^{A} - S_1^{B} - S_1^{C} - S^{AB}_2 - S^{CA}_2 + S^{BC}_2 + S_3^{ABC} \geq 0.
\end{align}

Adding the above two equations, we obtain
\begin{equation*}
    S^{AB}_2 - S^{BC}_2 + S^{CA}_2 \leq 1,
\end{equation*}
which leads us to Eq.~\eqref{fs_app_b}.
Similarly, one can derive Eqs.~\eqref{fs_app_a},\eqref{fs_app_c} by choosing the appropriate bipartitions.
Thus we can complete the proof of Observation~\ref{ob:Fully_separable_mixed}.
\end{proof}

\noindent
\textbf{Remark:}
We can indeed find stronger criteria from Eqs.~(\ref{trace_PPT_1}, \ref{trace_PPT_2}) which, however, require additional knowledge for their evaluation.
For that, we note that
\begin{equation}
    \label{min_prop}
    \text{min}(a,b) = \frac{a + b}{2} - \frac{|a - b|}{2}.
\end{equation}
Then, taking the minimization over both sides leads to the stronger criteria for full separability:
\begin{subequations}
\label{fs_stronger}
 \begin{align}
     \begin{split}
         \label{fs_stronger_a}
         S^{AB}_2 + S^{CA}_2 - S^{BC}_2  + |S_3^{ABC}  + S_1^{A} - S_1^{B} - S_1^{C}| &\leq 1,
     \end{split}\\
     \begin{split}
     \label{fs_stronger_b}
         S^{AB}_2 - S^{CA}_2 + S^{BC}_2  + |S_3^{ABC}  - S_1^{A} + S_1^{B} - S_1^{C}| &\leq 1,
     \end{split} \\
    \begin{split}
     \label{fs_stronger_c}
          -S^{AB}_2 + S^{CA}_2 + S^{BC}_2  + |S_3^{ABC}  - S_1^{A} - S_1^{B} + S_1^{C}| &\leq 1.
     \end{split}
 \end{align}
\end{subequations}
\section{Discussion on three-qubit separability}
\label{ap:example_test}
In this section, we compare various existing separability criteria for three-qubit states with those derived in this work. In the first subsection we compare the full separability criteria and in the second we compare the biseparability criteria. 

\subsection*{D1: Full separability}
 For our first criterion we note that any fully-separable state obeys $S_3^{ABC} \leq 1$\cite{wyderka2020characterizing}. The second criterion is a stronger version of the previous criterion. Any fully separable state satisfies $S_2 + 3 S_3^{ABC} \leq 3 + S_1$ \cite{imai2021bound}.  In the following we will see that the criterion in Eq.~\eqref{fs_app} can be weaker than the above two criteria, while the criterion derived in Eq.~\eqref{fs_stronger} can be stronger than them. 

 To illustrate our assertions, we consider the following five-parameter family of states,
\begin{equation}
\label{five_parameter_family}
    \varrho_1 = g \ketbra{\text{G}(p)}{\text{G}(p)} + w \ketbra{\text{W}(q,r)}{\text{W}(q,r)} + \frac{1 - g - w}{8} \eins,
\end{equation} where  $\left | \text{G}(p) \right \rangle = \sqrt{p} \left|000\right \rangle + \sqrt{1 - p} \left| 111 \right \rangle $ are the generalised GHZ states and $\left | \text{W}(q,r) \right \rangle = \sqrt{q} \left|001\right \rangle +  \sqrt{r} \left|010\right \rangle + \sqrt{1 - q - r} \left| 100 \right \rangle$ are the generalised W states. In table \ref{tab:Full_separability} we compare the detection of full separability for various sub-family of states by four different entanglement detection criteria. 
\begin{center}
\begin{table}[h]
    \centering
   \begin{tabular}{ | P{3.8 cm} | P{2.6 cm} | P{1.6 cm} | P{3 cm} | P{2.2 cm} | P{1.6 cm} | P{2.4 cm} | } 
 \hline  
 \multicolumn{7}{| c |}{Fully separability criteria}\\
 \hline

 Parameter &  States & $S_3 \leq 1 $ & $S_2 + 3S_3 \leq 3 + S_1$  & Eq \eqref{fs} & Eq \eqref{fs_stronger} & Optimal values \\ 
 \hline
 $w = 0, p = \frac{1}{2} $ & {noisy GHZ} & $g \leq 0.5$ & $g \leq \frac{1}{\sqrt{5}} \approx 0.447 $ & All separable & $g \leq \frac{1}{\sqrt 5}$ & $g \leq 0.2$ \\ [0.5 ex]

 $g = 0 , p = \frac{1}{2}, q = r = \frac{1}{3}$ & {noisy W} & $w \leq \sqrt{\frac{3}{11}}$ & $w \leq \frac{3}{\sqrt{41}} \approx 0.468$ & All separable &$w \leq \frac{3}{\sqrt{41}}$ & $w \leq 0.177$ \\ [1.5 ex]

 $g + w= 1 , p = \frac{1}{2}, q = r = \frac{1}{3}$ &  {W-GHZ mixture} & All ent. &  All ent. & All separable & All ent. & -  \\ [0.95 ex]
 
  $g =0 $ & noisy general W   & $w \leq \eta_1$ &   $w \leq \eta_2$ & $w \leq \eta_3$  & $w \leq \eta_4$  & -   \\ [0.5 ex]

  $g =0, q = 0, r = 1/2  $ & noisy Bell state in ab/c bipartition   & $w \leq \frac{1}{\sqrt{3}} $ &   $w \leq \frac{\sqrt 3}{\sqrt{11}}$ \footnotesize{$ \approx 0.522 $} & $w \leq \frac{1}{\sqrt{3}}$  \footnotesize{$ \approx 0.577 $}  & $w \leq \frac{1}{\sqrt 7} $ \footnotesize{$ \approx 0.378 $} &  $w \leq \frac{1}{3} $ \\ [0.5 ex]
 \hline
\end{tabular}
    \caption{Values for which the states are classified as separable for various criteria. The symbols in the fourth row are given in \eqref{table_symbols_fs}. The first three sub-family of states have complete permutational symmetry (i.e., the Bloch representation is invariant under permutation of any two qubits) and hence cannot be detected by our criterion.  }
    \label{tab:Full_separability}
\end{table}
\end{center}

The  symbols $\eta_1 \dots \eta_4$ in the fourth row are given as (note that if the denominator is negative, we ignore it as it would imply that the criterion corresponding to a particular value is always satisfied)
\begin{align}
   \label{table_symbols_fs}
    \eta^2_1 &= \text{min} \left(1,\frac{1}{(1 + 8r - 8(q^2 + r^2 + q(r - 1)))}\right) \\ \notag
    \eta^2_2 &= \text{min} \left(1,\frac{1}{(3 + 32 r - 32 (q^2 + q (-1 + r) + r^2))}\right) \\ \notag
    \eta^2_3 &= \text{min} \left(1,\frac{1}{(1 + 16 q^2 + 16 q (-1 + r) - 8 (-1 + r) r)}, \right. \\ \notag
    &~~~~~~~~~~~~~~~~~~ \left. \frac{1}{(1 + 8 r - 8 ((-1 + q) q + 4 q r + r^2))},\frac{1}{(1 - 8 q^2 + 16 (-1 + r) r + 8 q (1 + 2 r)) }\right) \\ \notag
    \eta^2_4 &= \text{min} \left(1,\frac{1}{(1 + 16 r - 16 (q(q-1) + 2 q r + r^2 ))},
    \frac{1}{(1 - 16r(r-1))}, \frac{1}{(1 - 16q(q-1))}\right) \\ \notag
\end{align}

In the top two plots of Fig. \ref{fig:fs_W_gen} we compare the criterion derived in Eq. \eqref{fs} with $S_3^{ABC} \leq 1$ and  $S_2 + S_3^{ABC} \leq 3 + S_1 $ respectively. Using this, we arrive at the following conclusions. First, the criterion in Eq. \eqref{fs} performs the worst in detecting entanglement for the W state ($p = q= \frac{1}{3}$), and states are close to it. This is expected as the W state is invariant under the permutation of all three qubits. Second, for values of $q,r$ that lie on the edges (i.e. $q = 0, q = r, r = 0$, which correspond to noisy pure-states that are biseparable along a fixed partition.) and do not admit complete permutational symmetry, our criterion seems to perform as well as $S_3^{ABC} \leq 1$, as evidenced by the heat map values going to zero at these values for the respective plot in Fig. \ref{fig:fs_W_gen}. Third, except for the isolated cases of $(p,q) = (0,0),(0,1),(1,0)$(which corresponds to noisy product states), our criterion seems to be universally weaker in detecting entanglement than  $S_2 + S_3^{ABC} \leq 3 + S_1 $.In our last plot in Fig. \ref{fig:fs_W_gen}, we compare the Imai \textit{et al} criterion with the stronger criterion derived in this work in Eq. \eqref{fs_stronger}. The plot shows that our criterion is always better than or equal to the Imai \textit{et al} criterion in detecting entanglement. Furthermore, both these criteria perform equally only on a small subset of states that are close to the symmetric W state.

  \begin{figure}[h]
    \centering
    \includegraphics[width = 0.48 \columnwidth]{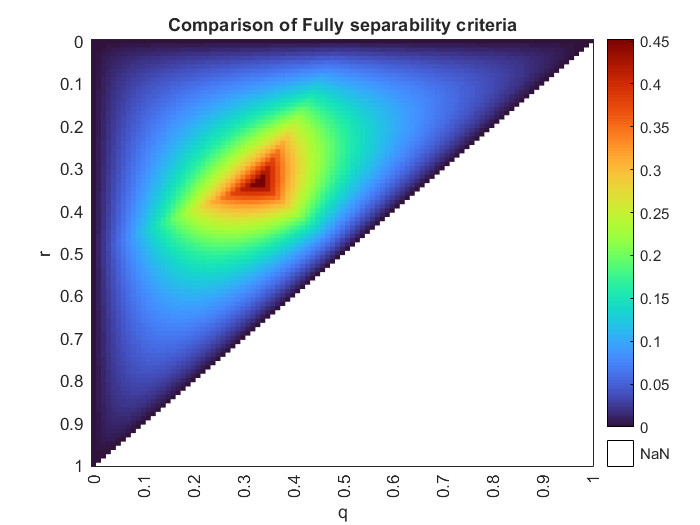}
    \includegraphics[width = 0.48 \columnwidth]{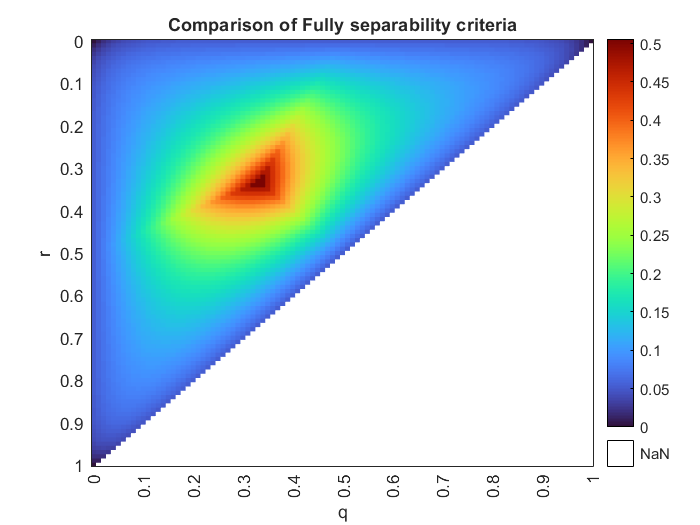}
    \includegraphics[width = 0.48 \columnwidth]{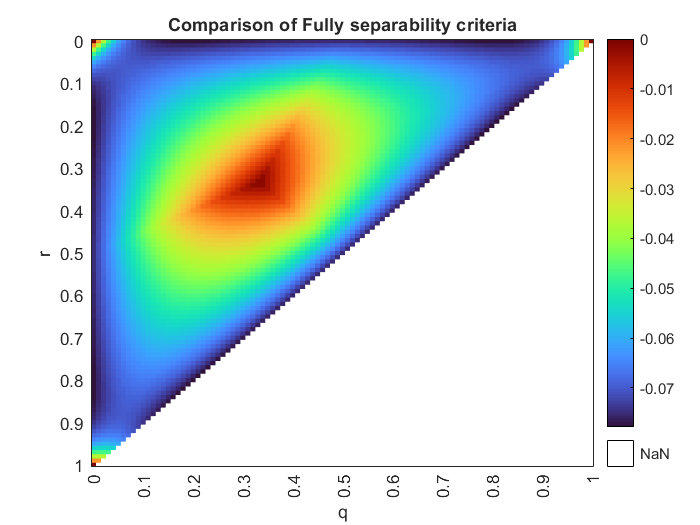}
    \caption{\textbf{Comparison of criteria for full separability}: In the above three plots, we plot the difference between the values of $w$ at which the state is detected as entangled (which we call $w_{ent}$) by the current known criteria and the criterion derived in this work with the parameters of the generalized W state($q,r$). The color bar plots the difference between the $w_{ent}$ estimated by a known criterion and the $w_{ent}$ estimated by our criterion. Thus a positive value implies that the known criterion is better at detecting entanglement, while a negative value implies that our criterion is better at detecting entanglement. NaN refers to the values of $q,r$ are not compatible with $q + r = 1$. In the top-left plot, we compare the criterion $S_3^{ABC} \leq 1$ and Eq. \eqref{fs}. In the top-right plot, we compare the criterion $S_2 + S_3^{ABC} \leq 3 + S_1 $ and Eq. \eqref{fs}. In the bottom plot, we compare the criterion $S_2 + S_3^{ABC} \leq 3 + S_1 $ and Eq. \eqref{fs_stronger}. }
    \label{fig:fs_W_gen}
\end{figure}

\subsection*{D2: Biseparability}

For our first criterion, we note,  all states that are biseparable obey $S_3^{ABC} \leq 3$ \cite{wyderka2020characterizing}. For the second criterion, we note, any state that is biseparable with respect to some partition obeys $S_2 + S_3^{ABC} \leq 3 + 3 S_1$ \cite{imai2021bound}.

We will demonstrate that our biseparability criterion derived in Eq. \eqref{eq_ap:bisep_union} can be more powerful especially when entangled states have less symmetric properties. {Since we do not have a criterion that can detect all biseparable states, we will consider a criterion that can detect most biseperable states. The motivation for choosing such a criterion is to get a qualitative idea of how our biseparability criterion performs with other known criteria and is not meant to be a quantitatively rigorous argument. We consider the criterion given in Eq. \eqref{eq_ap:bisep_union}. Our criterion detects a state as genuinely multipartite entangled if its coordinates lie outside of the union of the three partition biseparable regions. We expect our qualitative conclusions to hold true, since, the mixed biseparable states violations of this criterion (see Appendix \ref{ap:Mixed_bi_sep}) are small.    } 
\begin{equation}
\label{eq_ap:bisep_union}
    \min_{X \neq Y \neq Z \in \{A,B,C\}}\left( 2 | S_2^{XY} - S_2^{YZ} | + S_2 - 3 , S_2^{XY} +  S_2^{YZ} -  S_2^{XZ} - 1 \right) \leq 0. 
\end{equation}

In table \ref{tab:Bi_separability} we compare the detection of biseparability for generalized W states by four different entanglement detection criteria.
\begin{center}
\begin{table}[h]
    \centering
   \begin{tabular}{ | P{3.2 cm} | P{3 cm}  | P{3 cm} | P{2.2 cm} | } 
 \hline  
 \multicolumn{4}{| c |}{Bi -separability criteria}\\
 \hline

 States &  $ S_3 \leq 3$  & $S_2 + 3S_3 \leq 3 + 3 S_1$  & Eq \eqref{eq_ap:bisep_union} \\ 
 \hline

    noisy general W  &  $ w \leq \Xi_1$   &   $ w \leq \Xi_2 $ &   $ w \leq \Xi_3 $  \\ [0.5 ex]

 \hline
\end{tabular}
    \caption{Values for which the states are classified as separable for various criteria. The symbols in the fourth row are given in \eqref{table_symbols_bs}. }
    \label{tab:Bi_separability}
\end{table}
\end{center}

\begin{align}
   \label{table_symbols_bs}
    \Xi^2_1 &= \text{max} \left(1,\frac{3}{(1 + 8r - 8(q^2 + r^2 + q(r - 1)))}\right) \\ \notag
    \Xi^2_2 &= \text{max} \left(1,\frac{3}{\left(32 r - 32 \left(q^2 + r^2 + q(r - 1)\right) -5 \right)}\right) \\ \notag
    \Xi^2_3 &= \text{max} \left(1, \phi_1, \phi_2, \phi_3\right) \\ \notag
&~~~~~~~~~~~ \phi_1 = \text{min} \left(\frac{3}{\left(1 + 8q|2r + q - 1|\right)} , \frac{1}{(1 - 8 q^2 + 16 (-1 + r) r + 8 q (1 + 2 r))}  \right)  \\ \notag
&~~~~~~~~~~~ \phi_2 = \text{min} \left(\frac{3}{\left(1 + 8r|2r + q - 1|\right)} ,\frac{1}{(1 + 16 q^2 + 16 q (-1 + r) - 8 (-1 + r) r)}  \right)  \\ \notag
&~~~~~~~~~~~ \phi_3 = \text{min} \left(\frac{3}{\left(1 + 8|(q - r)(r + q - 1)|\right)} ,\frac{1}{(1 + 8 r - 8 ((-1 + q) q + 4 q r + r^2))}  \right)  \\ \notag
\end{align}

 In Fig. \ref{fig:bs_W_gen} we compare the criterion derived in Eq.\eqref{eq_ap:bisep_union} with $S_3^{ABC} \leq 3$ and  $S_2 +  S_3^{ABC} \leq 3 + 3 S_1 $ respectively. Using this, we arrive at the following conclusions. First, for states that are permutationally invariant and states close to such states, our criterion performs worse than the other two. For non-permutationally invariant states, our criteria seem to be better at detecting entanglement. Second, the edge states, on account of them being noisy biseparable pure states, are detected as separable by the three criteria and hence you see that the performance is the same(heat map is 0). Third, this plot shows that for certain states two-body correlations are more important in determining the bi-partite entanglement than three-body correlations. 

 \begin{figure}[h]
    \centering
    \includegraphics[width = 0.48 \columnwidth]{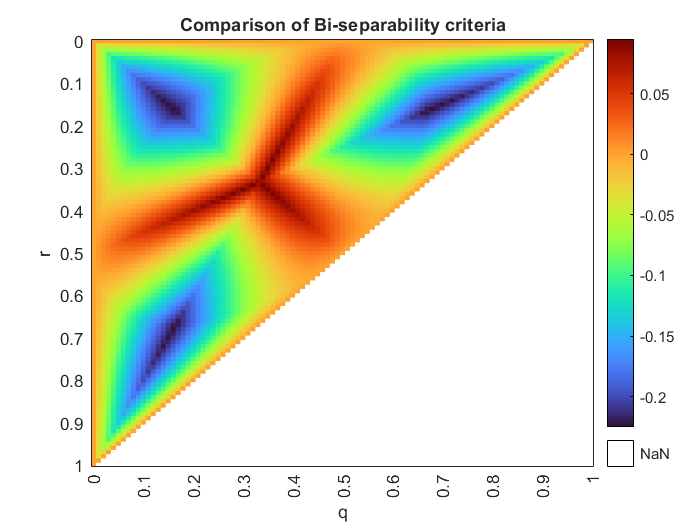} 
    \includegraphics[width = 0.48 \columnwidth]{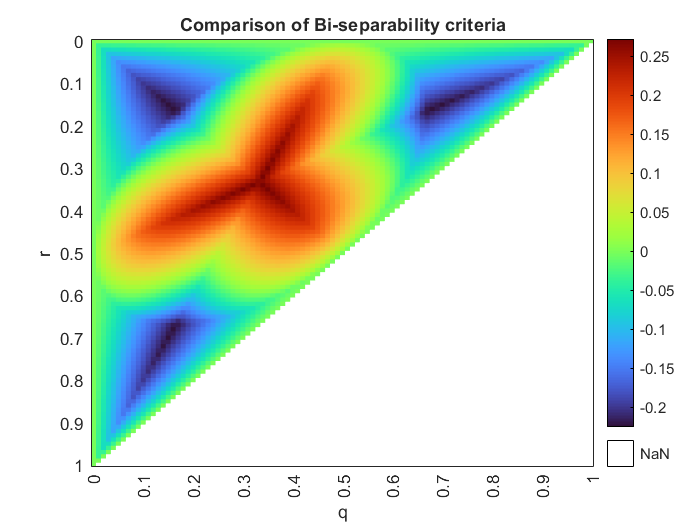}
    \caption{\textbf{Comparison of the biseparble criteria}: In the above three plots, we plot the difference between the values of $w$ at which the state is detected as entangled(which we call $w_{ent}$) by the current known criteria and the criterion derived in this work with the parameters of the generalized W state($q,r$). The color bar plots the difference between the $w_{ent}$ estimated by a known criterion and the $w_{ent}$ estimated by our criterion. Thus a positive value implies that the known criterion is better at detecting entanglement, while a negative value implies that our criterion is better at detecting entanglement. NaN refers to the values of $q,r$ are not compatible with $q + r = 1$. In the left plot, we compare the criterion $S_3^{ABC} \leq 3$ and Eq. \eqref{eq_ap:bisep_union}. In the right plot, we compare the criterion $S_2 + 3 S_3^{ABC} \leq 3 + S_1 $ and Eq. \eqref{eq_ap:bisep_union}.}
    \label{fig:bs_W_gen}
\end{figure}

\section{Proof of Observation~\ref{ob:bisep_criteria}}\label{ap:bisep_criteria}
\begin{observation}
Any three-qubit state which is separable for a bipartition $X|YZ$ obeys
\begin{subequations}
 \begin{align}
     \begin{split}
         S^{XY}_2 + S^{ZX}_2 - S^{YZ}_2 &\leq 1,
     \end{split}\\
     \begin{split}
         S^{XY}_2 , S^{ZX}_2 &\leq 1,
     \end{split} \\
    \begin{split}
           3 S^{XY}_2 - S^{ZX}_2  + S^{YZ}_2 &\leq 3,
     \end{split} \\
     \begin{split}
       - S^{XY}_2 + 3 S^{ZX}_2 + S^{YZ}_2 &\leq 3.
     \end{split}
 \end{align}
\end{subequations}
\end{observation}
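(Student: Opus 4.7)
The four inequalities split into three layers of increasing difficulty, which I would attack separately. First, $S^{XY}_2+S^{ZX}_2-S^{YZ}_2\le 1$ follows from exactly the PPT-based trick used in the proof of Observation~\ref{ob:Fully_separable_mixed}: any state separable along $X|YZ$ has $\vr^{T_X}\ge 0$, and the two positivity relations $\tr[\vr\,\sigma_y^{(X)}\vr^{T_X}\sigma_y^{(X)}]\ge 0$ and $\tr[\vr\,\sigma_y^{(Y)}\sigma_y^{(Z)}\vr^{T_{YZ}}\sigma_y^{(Y)}\sigma_y^{(Z)}]\ge 0$, when expanded in the Pauli basis and added, kill the one- and three-body contributions and leave the claimed linear combination of two-body sector lengths. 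Second, $S^{XY}_2\le 1$ and $S^{ZX}_2\le 1$ follow at once from convexity of the two-body sector length combined with the product formula $S^{XY}_2(\vr^X\otimes \vr^Y)=S_1^X\,S_1^Y\le 1$.

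The non-trivial pair $3S^{XY}_2-S^{ZX}_2+S^{YZ}_2\le 3$ together with (by $Y\leftrightarrow Z$ symmetry) its counterpart is where the real work lies. My plan is to use the two-copy swap identity $S^{AB}_2(\vr)=\tr((\vr\otimes\vr)\,R_A\otimes R_B)$ with $R_M=2F_M-\mathbbm{1}$ and $F_M$ the swap on two copies of subsystem $M$, so that the statement becomes $\tr((\vr\otimes\vr)\,\Omega)\le 0$ for the single Hermitian operator $\Omega=3R_X R_Y-R_X R_Z+R_Y R_Z-3\,\mathbbm{1}$. Using $F\pm\mathbbm{1}=\pm 2\Pi^{\pm}$ and the factorization $(F_Y-\mathbbm{1})(F_Z+\mathbbm{1})=-4\,\Pi_Y^{-}\otimes \Pi_Z^{+}$, a short algebra shows that the restriction of $\Omega$ to the symmetric $X$-sector $\Pi_X^{+}$ is a negative multiple of $\Pi_Y^{-}\otimes\Pi_Z^{+}$; this is the same two-copy operator identity that underlies the two-qubit inequality $3|\vec a|^2-|\vec b|^2+\|T\|_F^2\le 3$, which I would prove in passing and which furnishes the diagonal ($k=l$) contribution for free. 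For a biseparable decomposition $\vr=\sum_k q_k\,\vr_k^X\otimes\vr_k^{YZ}$ with $\vr_k^X$ pure (without loss of generality), the tensor square $\vr\otimes\vr$ factorizes across the $X{:}YZ$ cut, reducing $\tr((\vr\otimes\vr)\Omega)$ to a double sum over $(k,l)$ of products of the overlaps $f_{kl}=\tr(\vr_k^X\vr_l^X)$, $y_{kl}=\tr(\vr_k^Y\vr_l^Y)$, $z_{kl}=\tr(\vr_k^Z\vr_l^Z)$, $(yz)_{kl}=\tr(\vr_k^{YZ}\vr_l^{YZ})$.

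The main obstacle is the antisymmetric $X$-sector $\Pi_X^{-}$: there the restriction of $\Omega$ is \emph{not} negative-semidefinite, and correspondingly individual off-diagonal ($k\ne l$) terms in the double sum can be strictly positive. This reflects the fact that $g:=3S^{XY}_2-S^{ZX}_2+S^{YZ}_2$ is neither convex nor concave in $\vr$ (the $-S^{ZX}_2$ piece being concave), so no shortcut via extremal biseparable decompositions is available. Closing the estimate --- the analytical step credited to K.~Szyma\'nski in the acknowledgments --- I would attempt by using Cauchy--Schwarz-type bounds $f_{kl}^2\le f_{kk}f_{ll}$, $y_{kl}^2\le y_{kk}y_{ll}$, $(yz)_{kl}^2\le (yz)_{kk}(yz)_{ll}$, together with the pure-qubit constraint $f_{kk}=1$, to dominate each positive off-diagonal contribution by the strictly negative mass carried on the diagonal. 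The fourth inequality then follows with no further effort from the manifest $Y\leftrightarrow Z$ symmetry of this argument.
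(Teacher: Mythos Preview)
Your treatment of the first two families of inequalities matches the paper exactly: the PPT identity from Appendix~\ref{ap:Fully_separable_mixed} for $S^{XY}_2+S^{ZX}_2-S^{YZ}_2\le 1$, and convexity together with $S_2^{XY}(\vr^X\otimes\vr^Y)=S_1^X S_1^Y\le 1$ for $S^{XY}_2,\,S^{ZX}_2\le 1$. Your two-copy/swap repackaging of the hard inequalities is also equivalent to the paper's starting point: expanding a biseparable $\vr=\sum_k p_k\,|\psi_k^{X}\rangle\!\langle\psi_k^{X}|\otimes|\psi_k^{YZ}\rangle\!\langle\psi_k^{YZ}|$ gives
\[
3S^{XY}_2-S^{ZX}_2+S^{YZ}_2 \;=\; 3\sum_k p_k^2 \;+\; 2\sum_{k<l} p_k p_l\, g_{kl},
\]
with the cross term $g_{kl}$ exactly your off-diagonal contribution.

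The gap is in how you close the off-diagonal estimate. For \emph{pure} components $|\psi_k^{X}\rangle\otimes|\psi_k^{YZ}\rangle$ the diagonal terms saturate the bound \emph{exactly}, i.e.\ each contributes $3p_k^2$ (equivalently, zero in your $\tr[(\vr\otimes\vr)\Omega]\le 0$ formulation). There is therefore no ``strictly negative mass on the diagonal'' to borrow, and your Cauchy--Schwarz plan cannot work as stated. What must actually be shown is not that the off-diagonal pieces are nonpositive, but that each $g_{kl}\le 3$, so that the total is $\le 3(\sum_k p_k)^2=3$; this does not follow from generic overlap bounds $f_{kl}^2\le f_{kk}f_{ll}$ etc.

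The paper's resolution is different and short: observe that $g_{kl}$ is \emph{linear} in $|\psi_l\rangle\!\langle\psi_l|$, hence $g_{kl}=\langle\psi_l|M_k|\psi_l\rangle$ for the Hermitian operator
\[
M_k=\sum_{a,b}\Bigl(3\,\alpha^{X,Y}_{k:a,b}\,\sigma_a^{X}\sigma_b^{Y}-\alpha^{Z,X}_{k:a,b}\,\sigma_a^{Z}\sigma_b^{X}+\alpha^{Y,Z}_{k:a,b}\,\sigma_a^{Y}\sigma_b^{Z}\Bigr),
\]
built from the correlation coefficients of $|\psi_k\rangle$. Since $g_{kl}$ is LU-invariant, one may fix $|\psi_k\rangle=|0\rangle\otimes(\cos\theta\,|00\rangle+\sin\theta\,|11\rangle)$; a direct computation then shows $\lambda_{\max}(M_k)=3$ for every $\theta$, which gives $g_{kl}\le 3$ without any restriction on $|\psi_l\rangle$ and finishes the proof. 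This eigenvalue step, rather than a Cauchy--Schwarz domination argument, is the input credited to Szyma\'nski.
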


\begin{proof}
    Without loss of generality, it is sufficient to prove one of the bipartition.
    In this proof, we take the case where a state is separable
    with respect to partition $A|BC$, which obeys
\begin{subequations}
\label{bs_mix_app}
 \begin{align}
     \begin{split}
         \label{bs_mix_app_a}
         S^{AB}_2 + S^{CA}_2 - S^{BC}_2 &\leq 1,
     \end{split}\\
     \begin{split}
     \label{bs_mix_app_b}
         S^{AB}_2 , S^{CA}_2 &\leq 1,
     \end{split} \\
    \begin{split}
     \label{bs_mix_app_c}
           3 S^{AB}_2 - S^{CA}_2  + S^{BC}_2 &\leq 3,
     \end{split} \\
     \begin{split}
       \label{bs_mix_app_d}
       - S^{AB}_2 + 3 S^{CA}_2 + S^{BC}_2 &\leq 3.
     \end{split}
 \end{align}
\end{subequations}
Note that the inequality (\ref{bs_mix_app_a}) was proven in Appendix~\ref{ap:Fully_separable_mixed} and the inequality \eqref{bs_mix_app_b} is a trivial extension of the pure state criterion, due to the convexity of our coordinates.
In the following, we will then show the inequalities (\ref{bs_mix_app_c}, \ref{bs_mix_app_d}).

We begin by observing the case for pure states.
The biseparability condition for $A|BC$ then leads to the condition $S_2^{AB} = S_2^{CA}$, which we already discussed in the main text.
This implies that the left-hand sides in both Eqs. (\ref{bs_mix_app_c}, \ref{bs_mix_app_d}) become,
    \begin{equation}
       \label{pure_cond_app} 
        3 S^{AB}_2 - S^{CA}_2  + S^{BC}_2 =
        2 S^{AB}_2   + S^{BC}_2   =  S^{AB}_2 + S^{CA}_2  + S^{BC}_2 = 3.
    \end{equation}
That is, the inequalities (\ref{bs_mix_app_c}, \ref{bs_mix_app_d}) are clearly saturated by all rank-$1$ three-qubit states that are biseparable for $A|BC$.
Based on this fact, we will proceed with our proof.

Let us note that any biseparable state for $A|BC$ can be written as
    \begin{equation}
        \varrho_{A|BC} =
        \sum_i p_i
        \ket{\psi^{i}_{A|BC}}\!\bra{\psi^{i}_{A|BC}}.
    \end{equation}    
    Now the squared coefficient of a two-body Bloch component $\sigma^{(m)}_a \otimes \sigma^{(n)}_{b}$ for the qubit indices $m,n =A,B,C$ and Pauli indices $a,b=1,2,3$ are given by
    \begin{align}
        \label{convex_decomp}
        \mean{\sigma^{(m)}_a \otimes \sigma^{(n)}_{b}}_{\varrho_{A|BC}}^2
        = \sum_i p_i^2 \left(\alpha^{m,n}_{i:a,b}\right)^2 + 2 \sum_{i < j } p_i p_j \alpha^{m,n}_{i:a,b} \alpha^{m,n}_{j:a,b},
    \end{align}
where
$\alpha^{m,n}_{i:a,b} = \left\langle \psi^{i}_{A|BC} \right| \sigma^{(m)}_a \otimes \sigma^{(n)}_{b} \left | \psi^{i}_{A|BC} \right \rangle$.
With this expression and the condition in Eq.~(\ref{pure_cond_app}), we can rewrite the left-hand side in Eq.~(\ref{bs_mix_app_c}) as
\begin{equation} \label{lhsbisep}
     3 S^{AB}_2 - S^{CA}_2  + S^{BC}_2
     = 3 \sum_i p^2_i
     +  2\sum_{i < j}p_i p_j  g_{ij}, 
\end{equation}
where the function $g_{ij}$ contains the cross-terms given by
\begin{equation}
 g_{ij}
 = \sum_{a,b =1}^{3} 3\alpha^{A,B}_{i:a,b} \alpha^{A,B}_{j:a,b} -  \alpha^{C,A}_{i:a,b} \alpha^{C,A}_{j:a,b}
 + \alpha^{B,C}_{i:a,b} \alpha^{B,C}_{j:a,b}.
    \label{g_func}
\end{equation}

{Below, we }prove that for any two pure states that are separable in the $A|BC$ bipartition,
it holds that
\begin{equation}
    g_{ij} \leq 3, ~~~ \forall \ket{\psi^{i}_{A|BC}}, \ket{\psi^{j}_{A|BC}} \in \text{SEP}\left(A|BC\right),
    \label{g_bound}
\end{equation}
which results in that Eq.~(\ref{lhsbisep}) can be bounded by $3$.

Let us give this proof in the following by applying a similar technique used to find the maximum value of a separable numerical range, as discussed in Ref.~\cite{szymanski2023numerical}.
First, we rewrite Eq.~\eqref{g_func} as
\begin{equation}
g_{ij} = \mean{M_i}_{\ket{\psi^j_{A|BC}}},
\quad
 M_i \equiv \sum_{a,b =1}^{3} 3\alpha^{A,B}_{i:a,b} \sigma_a^A \sigma_b^B -  \alpha^{C,A}_{i:a,b} \sigma_a^C \sigma_b^A
 + \alpha^{B,C}_{i:a,b} \sigma_a^B \sigma_b^C.
    \label{g_func_1}
\end{equation}
Note that the operator $M_i$, defined in Eq. \eqref{g_func_1} is Hermitian.
Next, since $g_{ij}$ is invariant under local unitaries, without loss of generality, we can fix a biseparable state $\ket{\psi^i_{A|BC}}$ as
\begin{equation}
    \ket{\psi^i_{A|BC}} = \ket{0} \otimes \left(\cos (\theta) \ket{00} +  \sin(\theta)\ket{11}\right).
    \label{ap_bisep}
\end{equation}
Then, the operator $M_i$ depends only on the parameter $\theta$. It is essential to notice that the maximal eigenvalue of $M_i$ is given by $3$, which is independent of $\theta$.
This allows us to show that the maximal value of $g_{ij}$ for $\forall \ket{\psi^{i}_{A|BC}}, \ket{\psi^{j}_{A|BC}} \in \text{SEP}\left(A|BC\right)$ is $3$. 
Hence we can complete the proof of Observation~\ref{ob:bisep_criteria}.
\end{proof}

\noindent
\textbf{Remark:}
The biseparability inequalities can also be written in terms of the purity of total and reduced density matrices of the three-qubit state. This gives us insight into the bounds on the purity of reduced density matrices for a three-qubit biseparable state, similar to the ones that are obtained by Schmidt decomposition for two-qubit states. 
More precisely, the purity of two-qubit and single-qubit reduced density matrix takes the following forms:
\begin{align}
    \label{Marginal_one_trace}
    &\operatorname{tr}(\varrho_{X Y}^2) = \frac{1}{4}\left[ 1 + S_1^{X} + S_1^{Y} + S_2^{XY}\right],
    \\
    &\operatorname{tr}(\varrho_{X}^2) = \frac{1}{2}\left[ 1 + S_1^{X} \right].
    \label{Marginal_two_trace}
\end{align}
Using the Eqs.\eqref{Marginal_one_trace} and \eqref{Marginal_two_trace}, and with a bit of algebra, one can re-write the Eqs. \eqref{bs_mix_app_c} and \eqref{bs_mix_app_d} as,
\begin{subequations}
\begin{align}
    \begin{split}
        \label{purity_f1}
        3 \operatorname{tr}(\varrho_{AB}^2) + \operatorname{tr}(\varrho_{BC}^2)  \leq  \operatorname{tr}(\varrho_{AC}^2)  + \operatorname{tr}(\varrho_{A}^2) + 2\operatorname{tr}(\varrho_{B}^2), 
    \end{split} \\
     \begin{split}
        \label{purity_f2}
        3 \operatorname{tr}(\varrho_{AC}^2) + \operatorname{tr}(\varrho_{BC}^2) \leq \operatorname{tr}(\varrho_{AB}^2)  + \operatorname{tr}(\varrho_{A}^2) + 2\operatorname{tr}(\varrho_{C}^2)  .
    \end{split}
\end{align}
\end{subequations}

\section{Mixed Bi-separable states}
\label{ap:Mixed_bi_sep}
 We note here that the most general biseparable states are classical statistical mixtures of states that are biseparable along different bi-partitions, which do not satisfy the bounds given in Eq. \eqref{bs_mix_app}. One such non-trivial example is given by
 \begin{align}
     \label{eq_app:non_trivial_mix_bs}
     \varrho &= 0.65 \ket{\chi_{A|BC}}\! \bra{\chi_{A|BC}} + 0.35
     \ket{\chi_{C|AB}} \! \bra{\chi_{C|AB}},
 \end{align}
where
\begin{equation}
    \ket{\chi_{A|BC}} = \sqrt{0.97} \ket{000} + \sqrt{0.03} \ket{011},
    \quad
    \ket{\chi_{C|AB}} = - 0.97 \ket{000} - 0.127 \ket{100} + \sqrt{1 - 0.97^2 - 0.127^2} \ket{110},
\end{equation}
The above state is a mixture of states biseparable in the $A|BC$ and $B|CA$ bi-partition. However, for the above state, 
\begin{equation}
    \label{eq_app:non_trivial_viol}
    \min_{X \neq Y \neq Z \in \{A,B,C\}}  2 | S_2^{XY} - S_2^{YZ} | + S_2^{XY} + S_2^{YZ} + S_2^{XZ}  = 3.02214
\end{equation}

We note that the violation is very small and such violations seem to happen only for a small subset of biseparable states. {In fact, we were unable to find numerical values of Eq. \eqref{eq_app:non_trivial_viol} larger than $3.05$. }


\section{Proof of Observation \ref{ob:rank}}\label{ap:ob_rank}
Consider a rank-$n$ density matrix. Suppose its purity is lower bounded by $l$. Then, using $\text{tr}\left(\vr^2\right)  = \frac{1}{8}\left(1 + S_1  + S_2 + S_3^{ABC}\right) $,  we have
\begin{equation}
   \label{eq:rank_bound_1}
     S_1 + S_2 + S_3^{ABC} \geq 8l - 1.
\end{equation}
Furthermore, any three-qubit state obeys
\begin{equation}
 \label{eq:rank_bound_2}
     1 - S_1 + S_2 - S_3^{ABC} \geq 0,
\end{equation}
for details see Ref.~\cite{wyderka2020characterizing}.
Using Eqs.~(\ref{eq:rank_bound_1}, \ref{eq:rank_bound_2}), we arrive at 
\begin{equation}
    S_2 \geq 4l - 1.
\end{equation}
The proof is completed by noting that for rank-two states with $l = \frac{1}{2}$ and for rank-three states with $l = \frac{1}{3}$. 

\bibliographystyle{apsrev4-2}

%

\end{document}